%% file: 00_main.tex
\documentclass[conference]{IEEEtran}

\IEEEoverridecommandlockouts
\usepackage{tabularx}
\usepackage{caption}
\usepackage{multirow}
\usepackage{placeins}
\usepackage{float}
\usepackage{amsmath}
\usepackage{kotex}
\usepackage{algorithmic}
\usepackage{graphicx}
\usepackage{subcaption}
\usepackage{lipsum}
\usepackage{array}
\usepackage{textcomp}
\usepackage{stfloats}
\usepackage{makecell} 
\usepackage{url}
\usepackage{verbatim}
\usepackage{xcolor}
\usepackage{amsthm}
\usepackage[normalem]{ulem}
\usepackage{kotex}
\usepackage{diagbox} 
\usepackage{booktabs}
\usepackage{multirow} 
\usepackage{mathtools} 
\usepackage[linesnumbered,ruled,vlined]{algorithm2e}
\usepackage{hyperref}
\usepackage{orcidlink} 
\newcommand{\spara}[1]{\noindent{\bf #1}}
\DeclareRobustCommand{\hourglass}
{\mathrel{\mathpalette\hour@glass\relax}}
\hypersetup{
    colorlinks=false,
    hidelinks
}
\usepackage{enumitem}
\newcommand{\flexi}{\textsf{Flexi$-$clique}}
\newcommand{\NPA}{\textsf{NPA}}

\newcommand{\FPA}{\textsf{FPA}}

\newcommand{\EBA}{\textsf{EBA}}
\newcommand{\FastQC}{\textsf{FastQC}}

\newcommand{\CP}{$\mathrm{CP}$-Branching}

\newtheorem{example}{Example}
\newtheorem{theorem}{Theorem}

\newtheorem{problemDefinition}{Problem}
\newtheorem{definition}{Definition}
\newtheorem{property}{Property}
\newtheorem{rules}{\textbf{Rule}}

\newtheorem{lemma}{Lemma}
\newtheorem{remark}{Remark}



\def\BibTeX{{\rm B\kern-.05em{\sc i\kern-.025em b}\kern-.08em
    T\kern-.1667em\lower.7ex\hbox{E}\kern-.125emX}}
\begin{document}

\title{Efficient Computation of Maximum Flexi-Clique in Networks}

\author{
Song Kim\textsuperscript{1},
Hyewon Kim\textsuperscript{1},
Kaiqiang Yu\textsuperscript{2},
Taejoon Han\textsuperscript{1},
Junghoon Kim\textsuperscript{1},
Susik Yoon\textsuperscript{3},
Jungeun Kim\textsuperscript{4}
\\[0.8em]

\textsuperscript{1}\textit{Department of Computer Science and Engineering, UNIST, Ulsan, South Korea}\\
\textsuperscript{2}\textit{State Key Laboratory of Novel Software Technology, Nanjing University, Nanjing, China}\\
\textsuperscript{3}\textit{Department of Computer Science and Engineering, Korea University, Seoul, South Korea}\\
\textsuperscript{4}\textit{Department of Computer Engineering, Inha University, Incheon, South Korea}
}

\maketitle

\begin{abstract}
Discovering large cohesive subgraphs is a key task for graph mining. Existing models, such as clique, $k$-plex, and $\gamma$-quasi-clique, use fixed density thresholds that overlook the natural decay of connectivity as the subgraph size increases. The {\flexi} model overcomes this limitation by imposing a degree constraint that grows sub-linearly with subgraph size. We provide the algorithmic study of {\flexi}, proving its NP-hardness and analysing its non-hereditary properties. To address its computational challenge, we propose the Flexi-Prune Algorithm (\FPA), a fast heuristic using core-based seeding and connectivity-aware pruning, and the Efficient Branch-and-Bound Algorithm (\EBA), an exact framework enhanced with multiple pruning rules. Experiments on large real-world and synthetic networks demonstrate that {\FPA} achieves near-optimal quality at much lower cost, while {\EBA} efficiently computes exact solutions. {\flexi} thus provides a practical and scalable model for discovering large, meaningful subgraphs in complex networks.
\end{abstract}

\begin{IEEEkeywords}
Cohesive subgraph discovery,Clique relaxation, Branch-and-Bound
\end{IEEEkeywords}

\input{01_introduction}
\input{02_preliminaries}

\input{03_problem_statement}

\input{04_algorithm_heuristics}

\input{05_algorithm_BB}

\input{06_experiments}

\input{08_conclusion}

\bibliographystyle{IEEEtran}
\bibliography{name}

\end{document}

%% file: 01_introduction.tex
\section{INTRODUCTION}\label{sec:introduction}

Understanding what makes a subgraph significant is a fundamental problem in cohesive subgraph discovery.
Among various structural factors, subgraph size has been repeatedly shown to play a decisive role~\cite{wuchty2007increasing}.
Empirical studies in social network analysis indicate that larger groups tend to exert greater collective influence.
For instance, Metcalfe's law~\cite{metcalfe2013metcalfe} suggests that the value of a network grows super-linearly with its size, while Tverskoi et al.~\cite{tverskoi2021dynamics} report that the impact of a community increases with the number of its members.
These findings have naturally led to extensive research on identifying maximum cohesive subgraphs, giving rise to classical models such as clique~\cite{carraghan1990exact}, $k$-plex~\cite{balasundaram2011clique}, and quasi-clique~\cite{pattillo2013maximum}.

Despite their success, existing cohesive subgraph models rely on fixed connectivity constraints that are independent of subgraph size. The clique model enforces complete adjacency, which severely limits scalability in large networks~\cite{leskovec2020mining,chang2019efficient}. Relaxations such as $k$-plex and $\gamma$-quasi-clique loosen this requirement by allowing missing edges or enforcing a fixed density threshold. However, all these models impose size-invariant constraints, making them unable to reflect the systematic trade-off between subgraph size and internal connectivity observed in real-world networks.

\begin{figure}[t]
\centering
    \begin{subfigure}{.47\linewidth}
    \includegraphics[width=0.99\linewidth]{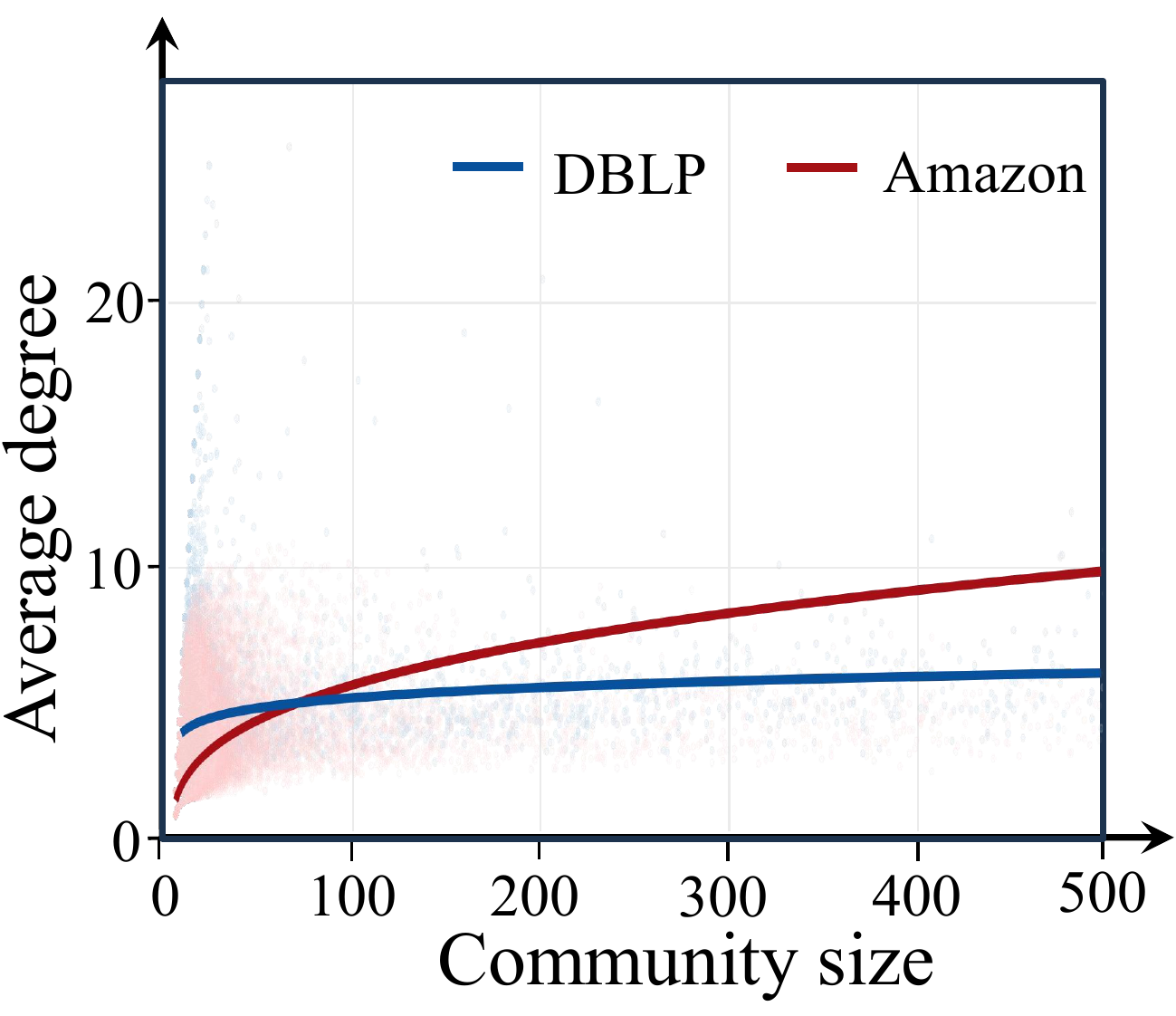}
    \caption{Social network tendency}
    \label{fig:motivation_intro1}
    \end{subfigure}
    \begin{subfigure}{.47\linewidth}
    \includegraphics[width=0.99\linewidth]{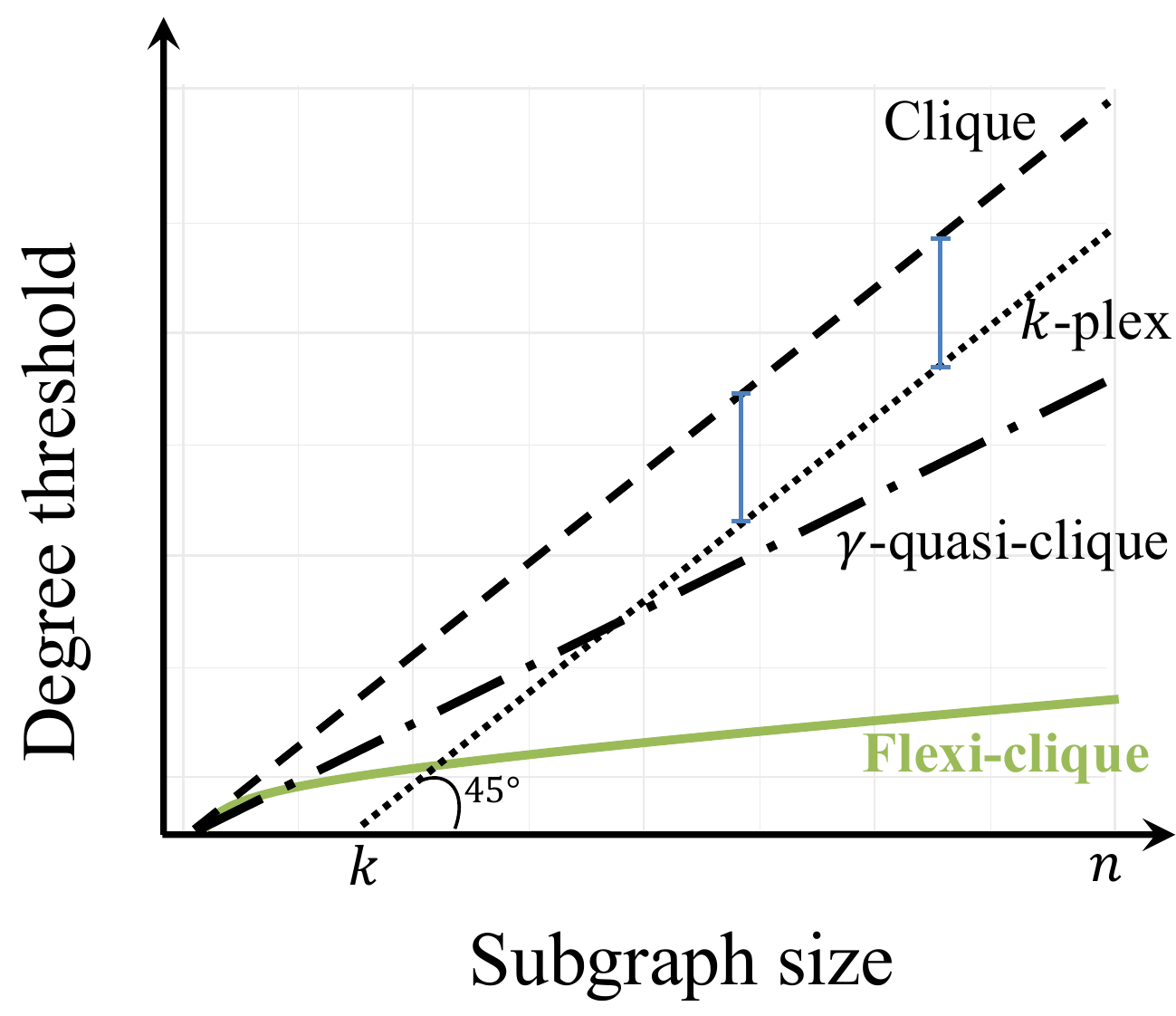}
    \caption{Minimum degree ratio}
    \label{fig:motivation_intro2}
    \end{subfigure}
\caption{Relationships between subgraph size and degree}
\label{fig:motivation_intro}
\vspace{-0.5cm}
\end{figure}

\spara{Design requirements for size-adaptive cohesion.}
Empirical evidence consistently shows that as communities grow, their internal connectivity decreases in a sub-linear manner. This phenomenon suggests that cohesive subgraph models intended for large-scale networks must satisfy three minimal requirements. First, cohesion should be enforced at the node level rather than as a global average, as density-based constraints may mask locally weakly supported nodes in large subgraphs. Second, cohesion requirements must increase with subgraph size while remaining sub-linear, allowing large structures to exist without enforcing near-clique density. Third, connectivity must be preserved to avoid fragmented solutions that lack community semantics. Existing models fail to satisfy these requirements simultaneously.

Figure~\ref{fig:motivation_intro1} illustrates the relationship between community size and average degree in real-world networks such as DBLP and Amazon. The observed pattern exhibits a clear sub-linear growth as community size increases. To relate this empirical tendency to model-imposed constraints, we examine the degree thresholds required by different cohesive subgraph definitions. Figure~\ref{fig:motivation_intro2} presents the corresponding minimum degree ratios implied by each model. Clique and $k$-plex enforce degree thresholds that grow linearly with subgraph size. Similarly, the $\gamma$-quasi-clique model follows a linear growth rule with a smaller slope $\gamma < 1$. Although these models differ in their tolerance to missing edges, they all rely on linear threshold functions with respect to subgraph size. As a result, they are unable to reflect the sub-linear connectivity decay that is consistently observed in large real-world communities.

\spara{Early work on the size-adaptive subgraphs.} To address this limitation, Kim et al.~\cite{flexi} introduced the {\flexi} model, which enforces a size-adaptive minimum-degree constraint of $\lfloor |H|^{\tau}\rfloor$ with $\tau\in[0,1)$. This formulation preserves node-wise stability while allowing cohesion requirements to relax smoothly with scale. Small subgraphs retain clique-like constraint, whereas larger ones remain cohesive without enforcing excessive degree. The parameter $\tau$ controls the sensitivity of cohesion to subgraph size, acting as a scale-sensitivity exponent rather than a fixed density threshold. As shown in Figure~\ref{fig:motivation_intro2}, the implied degree threshold of {\flexi} sub-linearly increases with size, aligning closely with empirical observations in Figure~\ref{fig:motivation_intro1}.

\spara{Non-heredity is inherent, not a limitation.}
A natural structural implication of adopting size-adaptive degree constraints is that the resulting model no longer satisfies hereditary or quasi-hereditary properties: a subgraph that satisfies the constraint does not necessarily preserve feasibility after the removal of a single node. Thus, this behaviour is not a modelling flaw, but an inherent consequence of enforcing sub-linear, size-dependent cohesion. This structural characteristic fundamentally distinguishes {\flexi} from classical clique relaxations and therefore necessitates a fundamentally different algorithmic treatment.

\spara{Limitations in the previous work.}
The initial study~\cite{flexi} demonstrated the empirical promise of the {\flexi} model, but was limited to heuristic exploration. It did not examine the algorithmic consequences of the size-dependent degree constraint, which breaks both hereditary and quasi-hereditary properties that underpin existing cohesive subgraph frameworks. As a result, no exact algorithm for the maximum {\flexi} problem was developed, fundamental questions regarding approximability remained unanswered, and the experimental analysis was confined to relatively small graphs. This work addresses these structural limitations directly. We provide the first formal analysis of the {\flexi} model, establishing its non-hereditary nature, inapproximability, and show why classical enumeration and optimisation techniques are inapplicable. 

Building on these insights, we develop an efficient heuristic and the first exact algorithm for the maximum {\flexi} search, based on a connectivity-preserving branching framework with six pruning rules. Extensive experiments on large real-world and synthetic networks demonstrate both effectiveness and practical efficiency.

\noindent\textbf{Contributions.} This work provides the first comprehensive theoretical and algorithmic study of the {\flexi} model. The main contributions are as follows:
\begin{itemize}[leftmargin=*]
    \item We propose an exact branch-and-bound algorithm for the maximum {\flexi} problem, together with an efficient heuristic that produces high-quality solutions and serves as a strong initial bound for exact search.
    \item We establish fundamental theoretical properties of the {\flexi} model, including inapproximability, and an explicit characterisation of its non-hereditary structure.
    \item We introduce a connectivity-preserving branching framework for non-hereditary cohesive subgraphs, supported by six pruning rules that enable effective and scalable exact computation.
    \item We perform extensive experimental evaluation on large real-world and synthetic networks, demonstrating both the effectiveness of the heuristic and the practical efficiency of the exact algorithm.
\end{itemize}

Overall, this study establishes a solid algorithmic and theoretical foundation 
for the {\flexi} model, positioning it as a practical and scalable framework for large-network analysis.

%% file: 02_preliminaries.tex
\section{Preliminaries}\label{sec:preliminaries}
This section introduces the notation and core definitions used in the problem formulation.
Let $G = (V, E)$ be a simple undirected graph, where $V$ and $E$ represent the sets of nodes and edges. We use $n = |V|$ and $m = |E|$ to denote the number of nodes and edges, respectively. For any set of nodes $H \subseteq V$, $G[H] = (H, E[H])$ represents the subgraph induced by $H$ where $E[H]$ is the set of edges with both endpoints in $H$. The set of neighbours of a node $v$ in graph $G$ is $N(v, G)$, which consists of all nodes adjacent to $v$ in $G$. For a set of nodes $H \subseteq V$, we denote $N_H(v, G)$ as the neighbours of $v$ in $G$ that belong to the set $H$, i.e., $N_H(v, G) = N(v, G) \cap H$. The degree of a node $v \in V$ in $G$ is $d(v, G)$ and represents $|N(v, G)|$. The minimum and maximum degrees of graph $G$ are $\delta(G)$ and $\Delta(G)$, respectively. The shortest path length between two nodes $u$ and $v$ in graph $G$ is $\operatorname{sp}_G(u,v)$. When the context is clear, we abbreviate these notations as $N(v)$, $N_H(v)$, $d(v)$, and $\operatorname{sp}(u,v)$.

We begin with two widely studied cohesive subgraph models: the $k$-core and the densest subgraph.

\begin{definition}[$k$-core~\cite{seidman1983network}]
Given a graph $G = (V, E)$ and a positive integer $k$, a $k$-core $C_k$ is a maximal set of nodes $H \subseteq V$ such that the induced subgraph $G[H]$ has minimum degree at least $k$, \textit{i.e.}, $\delta(G[H]) \geq k$. 
\end{definition}

The $k$-core structure is uniquely defined for any graph and forms a nested hierarchy, where each $(k+1)$-core is contained within the $k$-core~\cite{ malliaros2020core}. It is known that finding $k$-core structure takes $O(m)$ time~\cite{batagelj2003m}. 

\begin{definition}[Densest subgraph~\cite{lanciano2024survey}]
Given a graph $G = (V, E)$, the densest subgraph is a subgraph $G[H] = (H, E[H])$ that maximises the density defined as $\frac{|E[H]|}{|H|}$, where $|E[H]|$ and $|H|$ are the numbers of edges and nodes in $G[H]$.
\end{definition}

The densest subgraph problem admits a polynomial-time solution via maximum flow~\cite{goldberg1984finding}. A well-known $2$-approximation algorithm iteratively removes the minimum-degree node and tracks the maximum intermediate density~\cite{charikar2000greedy}.

These models are adopted as algorithmic primitives due to their ability to guarantee minimum degree (via $k$-core) and identify dense regions (via densest subgraph) efficiently in polynomial time. We next review several cohesive subgraph models relevant to our study. 

\begin{definition}[$k$-clique]
Given a graph $G = (V, E)$ and a positive integer $k$, a $k$-clique is a set of nodes $H \subseteq V$ with $|H| = k$, where every pair of nodes is connected in the induced subgraph $G[H]$: $\forall u, v \in H, (u, v) \in E$.
\end{definition}

The $k$-clique problem is $\mathrm{NP}$-hard~\cite{karp2010reducibility}, with no constant-factor approximation unless $\mathrm{P} = \mathrm{NP}$~\cite{lin2021constant}. Also, its strict edge requirement often limits applicability in real-world networks~\cite{conte2017fast,wang2022listing}.

To mitigate this, several relaxed clique models have been introduced. The $k$-plex model~\cite{seidman1978graph} allows each node to miss up to $k$ connections, requiring each node $v$ to have at least $|H| - k$ neighbours within $G[H]$. Two variants of the $\gamma$-quasi-clique model relax the constraints by requiring either (i) that each node has at least $\lceil \gamma \cdot (|H|-1)\rceil$ neighbours, or (ii) that the subgraph maintains a density of at least $\gamma$~\cite{sanei2021mining,tsourakakis2013denser}. Both quasi-clique variants reduce to a clique when $\gamma = 1$. These models benefit from key structural properties that enable efficient algorithmic design. The hereditary property ensures that any induced subgraph of a cohesive subgraph also satisfies the model constraints, which both clique and $k$-plex possess~\cite{chang2022efficient}. The quasi-hereditary property guarantees that any qualifying subgraph of size $k$ contains at least one proper subset of size $k-1$ that also satisfies the constraints, as exhibited by density-based $\gamma$-quasi-cliques~\cite{ribeiro2019exact}. These properties are routinely utilised for finding maximal or maximum cohesive subgraphs~\cite{trukhanov2013algorithms,chang2024maximum,mahdavi2014branch}. In the following section, we show that the proposed {\flexi} model lacks both hereditary and quasi-hereditary properties, thereby motivating the need for a fundamentally new algorithmic approach.

%% file: 03_problem_statement.tex
\section{PROBLEM STATEMENT}\label{sec:problem_statement}
This section introduces the concept of the {\flexi} model~\cite{flexi}, formally defines the corresponding maximisation problem, and discusses its properties. The {\flexi} model offers a degree-based relaxation of the clique structure, enabling the discovery of cohesive subgraphs over a spectrum of graph densities. Unlike the $k$-plex and $\gamma$-quasi-clique models, whose degree constraints grow linearly with subgraph size, {\flexi} employs a sub-linear growth rule.

\begin{definition}[{\flexi}~\cite{flexi}]\label{def:flexi}
Given a graph $G = (V, E)$ and a parameter $\tau \in [0,1)$, a {\flexi}, denoted as $H$, is a set of nodes that satisfies the following properties:
\begin{itemize}[leftmargin=*]
    \item Every node $u \in H$ has at least $\lfloor |H|^\tau \rfloor$ neighbours in the induced subgraph $G[H]$, \textit{i.e.}, $\delta(G[H]) \geq \lfloor |H|^\tau \rfloor$.
    \item The induced subgraph $G[H]$ is connected.
\end{itemize}
\end{definition}

The main goal of {\flexi} analysis is to find the largest subgraph satisfying the definition. We formalise this as follows.

\begin{problemDefinition}
[Maximum {\flexi} problem~\cite{flexi}] Given a graph $G = (V, E)$ and a parameter $\tau \in [0,1)$, the maximum {\flexi} problem is to find the largest {\flexi} of the given graph.
\end{problemDefinition}

Although the largest {\flexi} may not be unique, its size is uniquely determined. To establish the computational complexity of the maximum {\flexi} problem, we consider its decision version and provide a reduction from the clique decision problem. This direct reduction demonstrates the $\mathrm{NP}$-hardness of the maximum {\flexi} problem.

\begin{problemDefinition}[{\flexi} decision problem]
\label{def:flexiclique_decision}
Given a graph $G=(V,E)$, a parameter $\tau \in [0,1)$, and a positive integer $k$, {\flexi} decision problem is to decide whether there exists a {\flexi} $H \subseteq V$ in $G$ such that $|H| \geq k$.
\end{problemDefinition}

\begin{theorem}
\label{lem:kflexi_nphard}
The {\flexi} decision problem is $\mathrm{NP}$-hard.
\end{theorem}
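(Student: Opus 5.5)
We reduce from the clique decision problem (given $G=(V,E)$ and $k$, is there a clique of size at least $k$?), which is $\mathrm{NP}$-complete~\cite{karp2010reducibility}. The idea is to keep the graph unchanged and choose $\tau$ so close to $1$ that, for every size $h$ that can occur inside $G$, the threshold $\lfloor h^\tau \rfloor$ equals $h-1$. Then the {\flexi} constraint becomes the clique constraint. The instance $(G,k)$ maps to $(G,\tau,k)$, with $\tau$ depending only on $n=|V|$. We assume $n\ge 2$; smaller instances are solved trivially.

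\textbf{Choice of $\tau$.} Let $f(h)=\log(h-1)/\log h$ for $h\ge 2$. This function is nondecreasing in $h$, and $h^\tau \ge h-1$ holds exactly when $\tau \ge f(h)$. So it suffices to pick $\tau\in[f(n),1)$. We need $\tau$ to have a polynomial-size rational encoding, and we propose $\tau = 1-1/n^2$. To check $\tau \ge f(n)$, write $1-f(n)=\ln\frac{n}{n-1}/\ln n$. Using $\ln\frac{n}{n-1}\ge \frac1n$ and $\ln n\le n$, we get $1-f(n)\ge \frac{1}{n\ln n}\ge \frac{1}{n^2}$. Hence $\tau\ge f(n)\ge f(h)$ for all $2\le h\le n$, so $\lfloor h^\tau\rfloor \ge h-1$. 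Since $h^\tau<h$, in fact $\lfloor h^\tau\rfloor = h-1$. For $h=1$ the threshold is $0$, which is consistent with a single node being a clique.

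\textbf{Equivalence.} If $G$ has a clique $C$ with $|C|\ge k$, then $G[C]$ is connected and has minimum degree $|C|-1 = \lfloor |C|^\tau\rfloor$. So $C$ is a {\flexi} of size at least $k$. Conversely, let $H\subseteq V$ be a {\flexi} with $|H|\ge k$. Then $|H|\le n$, so every node of $H$ has at least $\lfloor |H|^\tau\rfloor = |H|-1$ neighbours in $H$. Therefore $G[H]$ is complete, and $H$ is a clique of size at least $k$. The reduction runs in polynomial time, which gives $\mathrm{NP}$-hardness.

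The main obstacle is the choice of $\tau$. A naive choice that makes the threshold clique-like only at size $k$ fails. For larger $h$, $h-1$ outgrows $h^\tau$, so a large non-clique could then qualify as a {\flexi}. The fix is to calibrate $\tau$ against $n$, the largest size a subgraph can have, and to rely on the monotonicity of $f$. We must also verify that this $\tau$ can be written as a short rational number, which the bound $1-f(n)\ge 1/n^2$ provides.
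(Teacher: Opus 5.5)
Your reduction is sound in its main line and follows the route the paper indicates: a reduction from the clique decision problem, with $\tau$ pushed toward $1$ as a function of the instance size. The paper uses the same device in its gap reduction, with $\tau = 1-\frac{1}{4n\log N}$. The following steps are all correct:
\begin{itemize}
\item the monotonicity of $f(h)=\ln(h-1)/\ln h$;
\item the estimate $1-f(n)\ge 1/(n\ln n)\ge 1/n^2$;
\item the resulting identity $\lfloor h^\tau\rfloor = h-1$ for $2\le h\le n$;
\item the observation that $\tau=1-1/n^2$ has a polynomial-size encoding.
\end{itemize}

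There is, however, one false claim, and it breaks the equivalence in an edge case. For $h=1$ the threshold is $\lfloor 1^\tau\rfloor = 1$, not $0$, because $1^\tau = 1$ for every $\tau$. A single vertex has degree $0$ in $G[\{v\}]$, so it is never a {\flexi}.

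As a result, your forward direction fails when $k=1$ and $G$ has no edges. In that case $(G,1)$ is a \textbf{YES} instance of clique, but $(G,\tau,1)$ is a \textbf{NO} instance: every {\flexi} has at least two nodes and therefore contains an edge.

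The repair is immediate. Treat $k=1$ separately: every such instance with $n\ge 1$ is a \textbf{YES} instance of clique, so map it to a fixed \textbf{YES} instance such as $(K_2,\tau,1)$. Alternatively, note that clique remains NP-complete when restricted to $k\ge 2$. For $k\ge 2$, every witness clique $C$ satisfies $2\le |C|\le n$, which is exactly the range where your calibration applies. The converse direction is unaffected, since any {\flexi} automatically has $|H|\ge 2$.
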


The $\mathrm{NP}$-hardness of the {\flexi} decision problem was established by Kim et al.~\cite{flexi} via a reduction from the clique decision problem. Since this result is not the main focus of the present work, we omit the proof in the main body and provide it in the Appendix A~\cite{kim2026appendix} for completeness.

Next, we establish a fundamental limitation of the Maximum {\flexi} problem by showing that it does not admit any constant-factor approximation. 
This result complements the NP-hardness established above and clarifies that the computational difficulty of {\flexi} is not merely due to exact optimisation, but also persists under approximation.

\begin{theorem}
The Maximum {\flexi} problem does not admit a constant-factor approximation unless $\mathrm{P}=\mathrm{NP}$.
\end{theorem}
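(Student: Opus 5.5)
The plan is an approximation-preserving reduction from the Maximum Clique problem, which, as recalled in Section~\ref{sec:preliminaries}, admits no constant-factor approximation unless $\mathrm{P}=\mathrm{NP}$~\cite{lin2021constant}. The reduction leaves the graph unchanged and only chooses $\tau$ as a function of $n$. We pick $\tau$ so close to $1$ that, on graphs with at most $n$ nodes, the {\flexi} degree constraint coincides with the clique constraint. The objective value is then preserved exactly, and any constant-factor approximation ratio transfers directly.

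\textbf{Step 1 (choice of $\tau$).} Given a clique instance $G=(V,E)$ with $n=|V|\ge 3$, set $\tau_n=\log(n-1)/\log n\in(0,1)$. Let $\tau$ be any rational in $[\tau_n,1)$ whose encoding size is polynomial in $n$. Such a $\tau$ exists because $1-\tau_n=\Theta(1/(n\log n))$, so a rational with denominator polynomial in $n$ fits in the interval.

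\textbf{Step 2 (the key inequality).} For every integer $2\le h\le n$ we show $h-1\le h^\tau<h$, so that $\lfloor h^\tau\rfloor=h-1$.
\begin{itemize}
\item The upper bound $h^\tau<h$ holds because $\tau<1$.
\item The lower bound is equivalent to $\tau\ge \log(h-1)/\log h$.
\item The function $f(h)=\log(h-1)/\log h$ is increasing in $h$. One can see this by writing $f(h)=1-\log\frac{h}{h-1}/\log h$: the numerator $\log\frac{h}{h-1}$ decreases in $h$ and the denominator $\log h$ increases. Hence $f(h)\le f(n)=\tau_n\le\tau$.
\end{itemize}

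\textbf{Step 3 (feasible sets coincide).}
\begin{itemize}
\item Let $H\subseteq V$ be a {\flexi} with $|H|=h\ge 2$. By Step 2, every node of $H$ has at least $h-1$ neighbours in $G[H]$, so $G[H]$ is complete and $H$ is a clique.
\item Conversely, every clique $H$ with $|H|\ge 2$ satisfies $\delta(G[H])=h-1=\lfloor h^\tau\rfloor$ and is connected, so it is a {\flexi}.
\item Singletons are feasible in both models.
\end{itemize}
Therefore the maximum {\flexi} size equals the clique number $\omega(G)$, and every {\flexi} returned is a clique of the same size.

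\textbf{Step 4 (transfer the hardness).} Suppose a polynomial-time algorithm returns a {\flexi} of size at least $\mathrm{OPT}/c$ for some constant $c$. Running it on $(G,\tau)$ yields a clique of size at least $\omega(G)/c$, which contradicts the inapproximability of Maximum Clique unless $\mathrm{P}=\mathrm{NP}$.

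\textbf{Main obstacle: the role of $\tau$.} The theorem lets $\tau$ be part of the input, and here $\tau$ depends on $n$. If hardness were required for a fixed constant $\tau$, the argument above would not suffice. One would then try padding instead: attach structures that force the threshold $\lfloor |H|^\tau\rfloor$ to track clique size while keeping the gap multiplicative. I expect that variant to be considerably harder. For the statement as posed, the delicate points are checking the monotonicity in Step 2 and noting that $\lfloor h^\tau\rfloor$ can be computed exactly for a rational $\tau$ of polynomial size.
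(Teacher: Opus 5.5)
Your proof is correct and takes a different, simpler route than the paper's. The paper gives a gap-preserving reduction from gap-\textsc{Clique}. It blows each vertex of $G$ up into a clique $C_v$ of size $B$, joins blocks completely along edges of $G$, and sets $\tau=1-\frac{1}{4n\log N}$ with $N=nB$. Because the new instance has far more than $n$ nodes, the threshold $\lfloor s^\tau\rfloor$ now lets each node have up to $(s-1)-\lfloor s^\tau\rfloor\le B/4$ non-neighbours. The NO case therefore needs a counting argument around a maximum clique $C$ of $G[U]$ to get $|S|\le\frac54|C|B$, and the gap drops from $c$ to $\frac{4c}{5}$.

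You instead apply $\tau\ge\tau_n$ to $G$ itself, so that $\lfloor h^\tau\rfloor=h-1$ for every possible size $h\le n$. This removes all slack: {\flexi}s coincide with cliques, and the reduction preserves the objective exactly. It therefore transfers any inapproximability ratio of Max Clique, including $n^{1-\varepsilon}$, not just constant factors. In fact the paper's own $\tau$ already satisfies $\tau\ge\tau_n$, since $1-\tau=\frac{1}{4n\log N}<\frac{1}{n\ln n}\le 1-\tau_n$; applied to $G$ directly it would force your exact correspondence. What the blow-up buys is an instance family on which {\flexi}s are genuinely more permissive than cliques, at the cost of a more delicate analysis and a weaker gap. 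Both arguments let $\tau$ depend on $n$, so neither covers a fixed constant $\tau$, as you rightly note.

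Two small corrections.
\begin{itemize}
\item Singletons are \emph{not} {\flexi}s. Here $\lfloor 1^\tau\rfloor=1$ while a lone node has degree $0$, so the feasible sets are exactly the cliques of size at least $2$. The identity $\mathrm{OPT}=\omega(G)$ then holds whenever $G$ has an edge, and edgeless inputs are trivial for Max Clique, so nothing breaks.
\item The reference you borrow from the preliminaries is a W[1]-hardness result for parameterised $k$-clique. The $\mathrm{P}\neq\mathrm{NP}$ inapproximability of Max Clique that you need comes from the PCP-based results (Feige et al., H\aa stad, Zuckerman), which are also what the paper's gap-\textsc{Clique} step relies on.
\end{itemize}
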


\begin{proof}
We prove the claim via a gap-preserving reduction from the gap version of the \textsc{Clique} problem.
Specifically, given a graph $G=(V,E)$, an integer $K$, and a constant $c>1$, the task is to distinguish between the following two cases, under the promise that one of them holds:
\begin{itemize}[leftmargin=*]
    \item \textbf{YES}: $G$ contains a clique of size at least $K$;
    \item \textbf{NO}: every clique in $G$ has size at most $K/c$.
\end{itemize}
It is NP-hard to distinguish between the above two cases~\cite{feige1991approximating}.

Given a graph $G=(V,E)$ with parameters $K$ and $c>1$, under the promise that $G$ contains a clique of size at least $K$ or that every clique in $G$ has size at most $K/c$, we construct a graph $G'=(V',E')$ that preserves the promised gap while amplifying differences in clique size. Specifically, for each node $v \in V$, we create a clique $C_v$ consisting of exactly $B$ nodes, where $B$ is a sufficiently large polynomial in $n$. The node set of $G'$ is defined as $V'=\bigcup_{v\in V} C_v$, and we denote $|V'|$ by $N$.
For any two distinct nodes $u,v \in V$, the cliques $C_u$ and $C_v$ are made fully adjacent if $(u,v) \in E$, and completely non-adjacent otherwise.
Next, we set the parameter $\tau$ as
\begin{align*}
\tau = 1-\frac{1}{4n\log N},
\end{align*}
which will be used to control the gap between $s$ and $s^\tau$ in the analysis below. The graph construction process for the gap-preserving reduction
is illustrated in Figure~\ref{fig:reduction}.
For the constructed graph $G'$, we denote by $\mathrm{OPT}(G')$ the maximum size of a {\flexi}.

\begin{figure}[t]
\centering
\includegraphics[width=0.85\linewidth]{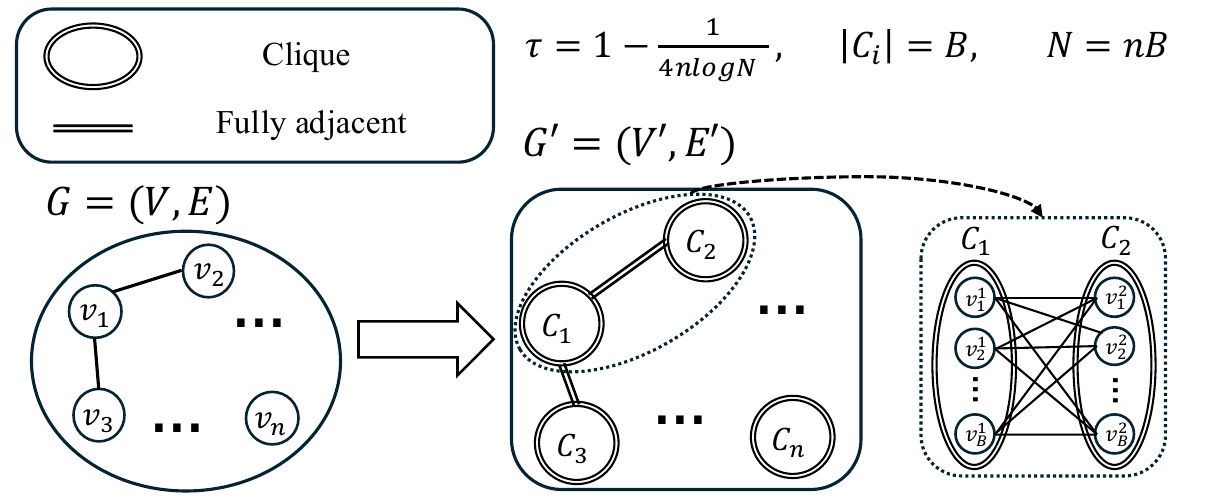}
\caption{Reduction process} 
\vspace{-0.5cm} 
\label{fig:reduction}
\end{figure}

\spara{YES Case.}
Assume that $G$ contains a clique $Q\subseteq V$ with $|Q|=K$, and let $S = \bigcup_{v\in Q} C_v \subseteq V'$. 
Clearly, the set $S$ has size $KB$, and the induced subgraph $G'[S]$ is connected since all cliques corresponding to nodes in $Q$ are adjacent. Consider any node $x\in S$, since $C_v$ is a clique, $x$ has $B-1$ neighbours inside $G'[C_v]$. Moreover, from the other $K-1$ cliques in $S$, the node $x$ has in total $(K-1)B$ neighbours. Therefore,
\begin{align}
\deg_{G'[S]}(x) \ge (B-1)+(K-1)B = KB-1.
\end{align}
As $\tau\in [0,1)$, $KB-1 \ge \lfloor (KB)^\tau \rfloor$
for the chosen value of $B$.
Hence $\deg_{G'[S]}(x)\ge \lfloor |S|^\tau\rfloor$ for all $x\in S$, and thus $S$ is a valid {\flexi} with $|S|=KB$, implying
\begin{align}\label{eq:eq2}
\mathrm{OPT}(G') \ge KB.
\end{align}


\spara{NO Case.} \textcolor{black}{
Now assume that every clique in $G$ has size at most $K/c$. Let $S\subseteq V'$ be an arbitrary {\flexi} in $G'$. For each $v\in V$, define $a_v = |S\cap C_v|$, and let
\begin{align}
U = \{v\in V : a_v > 0\}.
\end{align}
Denote
\begin{align}
s = |S| = \sum_{v\in U} a_v.
\end{align}
}
\textcolor{black}{
We first bound the total number of non-neighbours that a node in a {\flexi} can have. Recall that $\tau = 1-\frac{1}{4n\log N}$ and that $s \le nB = N$. Thus,
\begin{align}
s - s^\tau \le \frac{s}{4n} \le \frac{B}{4}.
\label{eq:non-neigh-bound}
\end{align}
}
\textcolor{black}{
Fix any $u\in U$ and consider an arbitrary node $x\in S\cap C_u$. By construction of $G'$, the degree of $x$ in the induced subgraph $G'[S]$ is
\begin{align}
\deg_{G'[S]}(x)
= (a_u - 1) + \sum_{\substack{v\in U:\\ (u,v)\in E}} a_v.
\end{align}
Therefore, the number of nodes in $S$ that are non-adjacent to $x$ is
\begin{align}
(s-1) - \deg_{G'[S]}(x)
= \sum_{\substack{v\in U:\\ (u,v)\notin E}} a_v.
\end{align}
}
\textcolor{black}{
Since $S$ is a {\flexi}, we have $\deg_{G'[S]}(x) \ge \lfloor s^\tau \rfloor$, and hence
\begin{align}
\sum_{\substack{v\in U:\\ (u,v)\notin E}} a_v
&\le (s-1) - \lfloor s^\tau \rfloor 
\le s - s^\tau \le \frac{B}{4},
\label{eq:block-nonneigh}
\end{align}
where the last inequality follows from~\eqref{eq:non-neigh-bound}.
}
\textcolor{black}{
Now let $C$ be a maximum clique in the induced graph $G[U]$. By the NO-case assumption, we have $|C| \le K/c$. By maximality of $C$, for every $v\in U\setminus C$ there exists a node $u\in C$ such that $(u,v)\notin E$, and we assign each $v\in U\setminus C$ to one such $u$.
}
\textcolor{black}{
For any fixed $u\in C$, it follows from~\eqref{eq:block-nonneigh} that
\begin{align}
\sum_{\substack{v\in U\setminus C}} a_v
\le
\sum_{\substack{v\in U:\\ (u,v)\notin E}} a_v
\le \frac{B}{4}.
\end{align}
Summing over all $u\in C$, we obtain
\begin{align}
\sum_{v\in U\setminus C} a_v \le |C|\cdot \frac{B}{4}.
\label{eq:outside-clique}
\end{align}
}
\textcolor{black}{
Since each block $C_v$ has size exactly $B$, we also have
\begin{align}
\sum_{u\in C} a_u \le |C|B.
\end{align}
Combining this with~\eqref{eq:outside-clique}, we conclude that
\begin{equation}
\begin{aligned}
|S|
&= \sum_{u\in C} a_u + \sum_{v\in U\setminus C} a_v \\
&\le |C|B + |C|\cdot \frac{B}{4}
= \frac{5}{4}|C|B
\le \frac{5}{4}\cdot \frac{K}{c}B .
\end{aligned}
\label{eq:no-case-bound}
\end{equation}
Therefore,
\begin{align}\label{eq:eq_fin}
\mathrm{OPT}(G') \le \frac{5}{4}\cdot \frac{K}{c}B.
\end{align}
In conclusion, combining~\eqref{eq:eq2} and~\eqref{eq:eq_fin}
yields a constant gap between \textbf{YES} and \textbf{NO} instances.
Therefore, the Maximum {\flexi} problem does not admit any
constant-factor approximation unless $\mathrm{P}=\mathrm{NP}$.
}
\end{proof}

Beyond its computational hardness, we next analyse the structural properties of {\flexi}, showing that it lacks both the hereditary and quasi-hereditary properties commonly assumed in cohesive subgraph models.

\begin{property}[Non-hereditary]
{\flexi} is non-hereditary.
\end{property}
\begin{proof}
Let $H \subseteq V$ be a {\flexi} subgraph with minimum degree $\delta(G[H]) = \lfloor |H|^\tau \rfloor$, and let $x \in H$ be a neighbour of the lowest-degree node $v \in H$. If $H \setminus \{x\}$ is also a {\flexi}, it must satisfy $\delta(G[H \setminus \{x\}]) \geq \lfloor (|H|-1)^\tau \rfloor$. However, the removal of $x$ may reduce the degree of $v$ by one, possibly violating the threshold, especially when $\lfloor |H|^{\tau} \rfloor = \lfloor (|H|-1)^{\tau} \rfloor$. This contradicts the assumption, thus {\flexi} is not hereditary.
\end{proof}

\begin{figure}[t]
\centering
\small
\begin{subfigure}{.35\linewidth}
\includegraphics[width=0.99\linewidth]{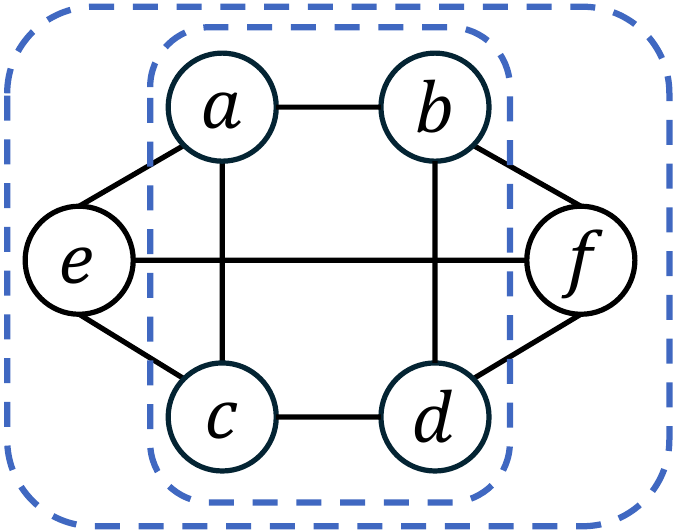}
\caption{{\flexi}}
\label{fig:non_hereditary1}
\end{subfigure}
\hspace{0.3cm}
\begin{subfigure}{.35\linewidth}
\includegraphics[width=0.99\linewidth]{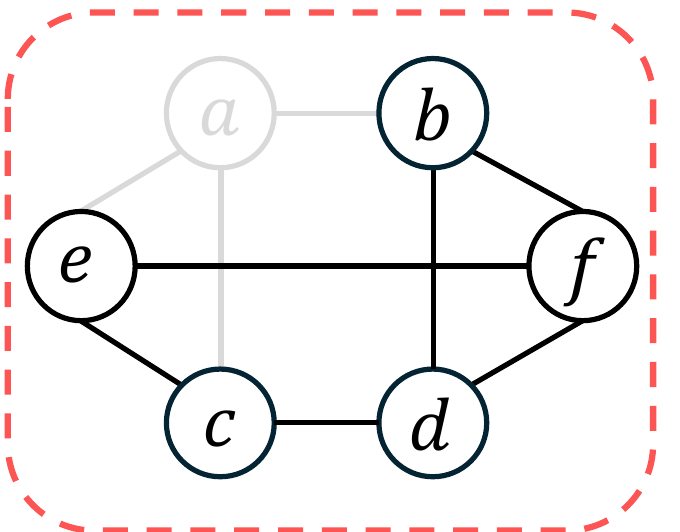}
\caption{non-{\flexi}}
\label{fig:non_hereditary2}
\end{subfigure}
\vspace{-0.2cm}
\caption{Non-hereditary property}
\label{fig:non_hereditary}
\vspace{-0.5cm}
\end{figure}

\begin{property}[Non-quasi-hereditary]
{\flexi} is non-quasi-hereditary.
\end{property}
\begin{proof}
Suppose $H \subseteq V$ is a {\flexi} with $\delta(G[H]) = \Delta(G[H])$. For any $x \in H$, removing $x$ decreases the degree of its neighbours, potentially violating the condition $\delta(G[H \setminus \{x\}]) \geq \lfloor (|H| - 1)^\tau \rfloor$, even if $\lfloor |H|^\tau \rfloor = \lfloor (|H| - 1)^\tau \rfloor$. Thus, {\flexi} is not quasi-hereditary.
\end{proof}

Having established the $\mathrm{NP}$-hardness of the maximum {\flexi} problem and its non-hereditary nature, we conclude that dedicated algorithmic strategies are required to address the challenges unique to {\flexi}.

\begin{example}
Figure~\ref{fig:non_hereditary} illustrates the non-hereditary properties of the {\flexi}. For the given graph in Figure~\ref{fig:non_hereditary1} with $\tau = 0.75$, the set of nodes $\{a,b,c,d,e,f\}$ and $\{a,b,c,d\}$ form the {\flexi}. However, removing any single node from the graph results in a set of nodes that fails to meet the constraint (as $\lfloor 5^{0.75} \rfloor = 3$). This demonstrates that {\flexi} is both non-hereditary and non-quasi-hereditary, and implies that the absence of an $n$-sized {\flexi} does not preclude the existence of a $(n+1)$-sized or even larger one. 
\end{example}

%% file: 04_algorithm_heuristics.tex
\section{Flexi-Prune Algorithm (\FPA)}\label{sec:algorithm_heuristic}

While exact algorithms provide optimal solutions, they often become impractical for large graphs due to their high computational cost. To achieve scalability, we propose the Flexi-Prune Algorithm (\FPA), a fast heuristic designed to produce high-quality solutions efficiently. {\FPA} integrates two key design principles: (i) \textit{core-based seed selection}, which locates dense regions likely to contain feasible {\flexi} subgraphs, and (ii) \textit{connectivity-aware pruning}, which incrementally refines the subgraph while preserving structural coherence.

\spara{High-level idea.} {\FPA} proceeds in two stages. First, it identifies a promising seed by progressively increasing $k$ in the $k$-core hierarchy until the largest $k$-core component satisfies the {\flexi} degree constraint, and then confines the search to the enclosing $(k-1)$-core. This step ensures that the algorithm starts from a dense yet sufficiently large region. In the second stage, {\FPA} iteratively removes the lowest-degree node that is not an articulation point(i.e., its removal does not disconnect the subgraph), updating node degrees and the threshold after each removal until all remaining nodes satisfy the {\flexi} condition. This pruning process effectively eliminates peripheral nodes while maintaining connectivity, yielding a cohesive subgraph that closely approximates the maximum {\flexi}.

\spara{Seed selection via core decomposition.} Because {\flexi} requires each node to satisfy a size-dependent degree threshold, dense core regions serve as natural candidates for feasible solutions. Real-world networks generally follow a power-law degree distribution (Fig.~\ref{fig:observation1}), and thus many peripheral low-degree nodes cannot meet this requirement. Furthermore, within each $k$-core, the largest connected component (LCC) dominates the whole structure in terms of the size, as shown in Fig.~\ref{fig:observation2}, where the LCC ratio, i.e., $|$LCC$|$ / $|k$-core$|$, remains close to $1$ even as $k$ increases. This observation implies that most nodes in a core are contained within the single largest connected component, which can serve as an effective starting point for exploration. 

Also, increasing $k$ gradually removes low-degree nodes, yielding smaller yet denser subgraphs; at the same time, the required degree threshold $\lfloor |H|^{\tau}\rfloor$ decreases. {\FPA} therefore increments $k$ until the LCC of the $k$-core first satisfies the {\flexi} constraint and then confines the search to the enclosing $(k-1)$-core. If no such $k$ is found, {\FPA} starts from the LCC of the $k$-core with the largest $k$ value, ensuring that the process begins in the densest region available. This strategy ensures that the algorithm begins from a dense but not overly restrictive region, balancing efficiency and coverage of potential maximal {\flexi} subgraphs.


\begin{figure}[t]
\centering
\small
\captionsetup[subfigure]{
  justification=centering,
  margin={0.4cm,0cm}
}
\begin{subfigure}[b]{0.48\linewidth}
\centering
\includegraphics[width=0.99\linewidth]{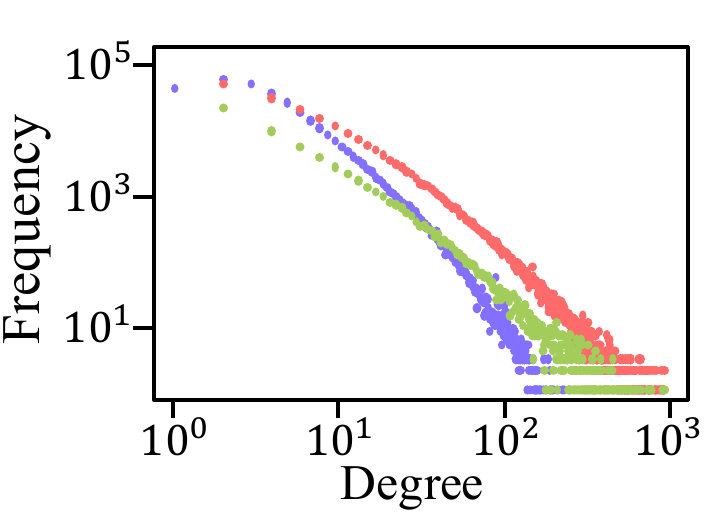}
\vspace{-0.3cm}
\caption{Degree distribution}
\label{fig:observation1}
\end{subfigure}
\begin{subfigure}[b]{0.48\linewidth}
\centering
\includegraphics[width=0.99\linewidth]{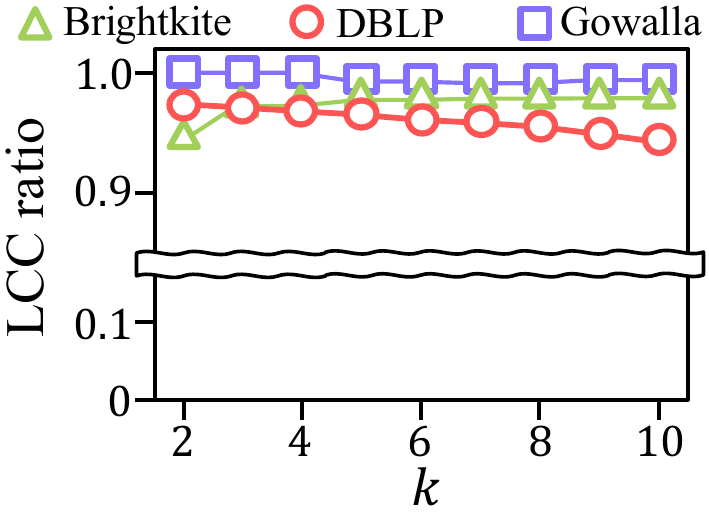}
\vspace{-0.3cm}
\caption{LCC ratio}
\label{fig:observation2}
\end{subfigure}
\caption{Observations on core decomposition}
\label{fig:observation}
\vspace{-0.7cm}
\end{figure}

\SetAlgoNoEnd
\begin{algorithm}[t]
\small
\SetAlgoLined
\SetKw{break}{break}
\SetKw{return}{return}
\SetKwData{false}{False}
\SetKwData{true}{True}
\SetKwData{null}{null}
\SetKwData{LCC}{LCC}
\SetKwFunction{buildFDC}{buildFDC}
\SetKwFunction{sort}{sortByDegree}
\SetKwFunction{emptyQueue}{emptyQueue}
\SetKwFunction{queue}{Queue}
\SetKwFunction{minDegree}{minDegree}
\SetKwFunction{nextNode}{nextNode}
\SetKwFunction{remove}{remove}
\SetKwFunction{degreeRearrange}{degreeRearrange}
\SetKwFunction{isArticulation}{isArticulation}
\SetKwIF{If}{ElseIf}{Else}{if}{:}{else if}{else}{}
\SetKwFor{While}{while}{:}{}

\KwIn{Graph $G=(V,E)$, parameter $\tau$}
\KwOut{{\flexi} $C$}

Compute $k$-core decomposition of $G$\;
$k^* \leftarrow \min\{k : \LCC \text{ of } k\text{-core satisfies {\flexi}}\}$\;
\If{$k^* = $ \null}{
    $C \leftarrow \LCC$ of the $k$-core with maximum $k$\;
}
\Else{
    \mbox{$C \leftarrow$ $(k^*-1)$-core component containing $\LCC$ of $k^*$-core\;}
}

\sort{$C$}\; 
\While{$C \neq \emptyset$}{
    \If{$\delta$ $(G[C])$ $\geq \lfloor |C|^\tau \rfloor$}{
        \return $C$\;
    }

    \For{$u \in C$}{
        \If{\isArticulation{$G[C], u$} = \false}{
            $C \leftarrow C \setminus \{u\}$\;
            \degreeRearrange{$N(u,G[C])$}\;
            \break\;
        }
    }
}
\return $C$\;
\caption{\mbox{\FuncSty{{\FPA}}: Flexi-Prune Algorithm}}
\label{alg:FPA}
\end{algorithm}

\spara{Connectivity-aware node pruning.} 
After the initial seed is obtained, {\FPA} performs iterative pruning while maintaining connectivity. To efficiently verify connectivity during node removal, we leverage a fully dynamic connectivity structure based on the multi-level spanning forest framework of Holm-de Lichtenberg-Thorup~\cite{holm2001poly}, 
in which each spanning forest can be represented using a link-cut tree~\cite{sleator1981data}, supporting \textsc{link} and \textsc{cut} operations in logarithmic time. Since the pruning phase always maintains the subgraph connected, we only need to test whether removing a node $u$ would disconnect the remaining graph. We therefore check connectivity by temporarily deleting all edges incident to $u$ and verifying whether all neighbours of $u$ belong to the same connected component in the resulting graph; if this condition is violated, the deleted edges are restored. This enables connectivity-preserving pruning without recomputing connected components at each iteration.

\spara{Algorithmic procedure.} The pseudo description of {\FPA} is shown in Algorithm~\ref{alg:FPA}. The algorithm first performs $k$-core decomposition and identifies the smallest $k^*$ whose largest connected component (LCC) satisfies the {\flexi} condition. If no such $k^*$ exists, it starts from the LCC of the $k$-core with the maximum $k$ value, ensuring that the search begins in the densest region. Nodes are then sorted in ascending order of degree. During the iterative peeling phase, {\FPA} repeatedly examines the minimum-degree node: if it is not an articulation point, it is removed and all affected degrees are updated; otherwise, the next node in order is tested. This process continues until the remaining subgraph satisfies the {\flexi} constraint, yielding a feasible structure.

\spara{Time complexity analysis.} The overall complexity of {\FPA} is dominated by three components. 
(i) \textit{Core decomposition} requires $O(m)$ time~\cite{batagelj2003m}. 
(ii) \textit{Iterative component extraction} for the $k$-core hierarchy costs $O(k^*(n+m))$, where connected components are computed in $O(n+m)$ per level. 
(iii) \textit{Peeling} takes $O(\Delta(G)\,n^2\log n)$, as each node removal may trigger repeated articulation checks on $O(n)$ nodes, and every dynamic connectivity operation requires $O(\log n)$ update cost per edge. 
Since $m\le\Delta(G)\,n$, the total complexity is $O(\Delta(G)\,n^2\log n)$, dominated by the peeling phase. 
In practice, real networks are sparse ($\Delta(G)\!\ll\!n$), and articulation point checks tends to occur $O(n)$ times rather than $O(n^2)$, leading to near-linear scaling as shown in Section~\ref{sec:experiments}.

\begin{figure}[t]
\centering
\small
\begin{subfigure}{.45\linewidth}
\includegraphics[width=0.99\linewidth]{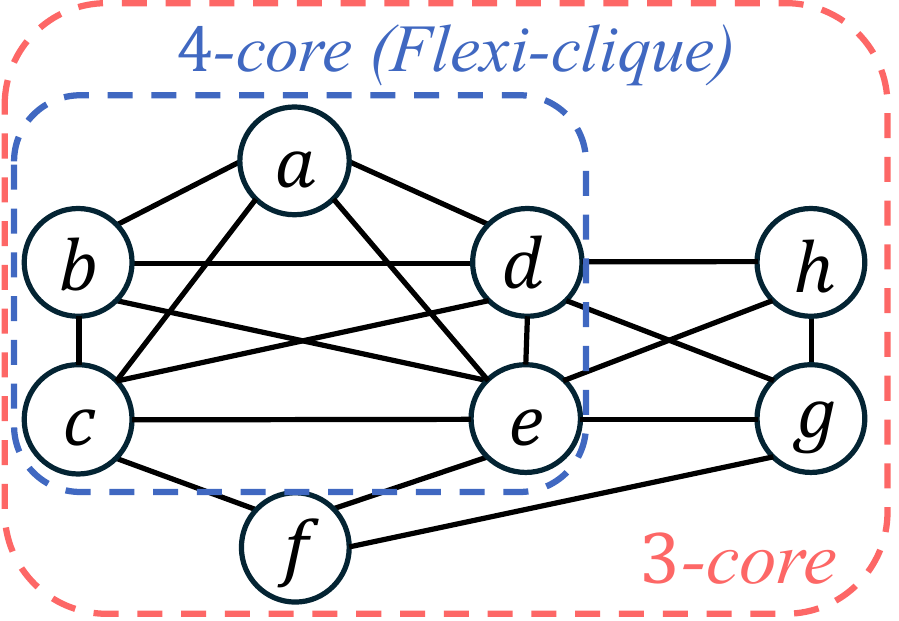}
\caption{Largest feasible core}
\label{fig:FPA1}
\end{subfigure}
\begin{subfigure}{.45\linewidth}
\includegraphics[width=0.99\linewidth]{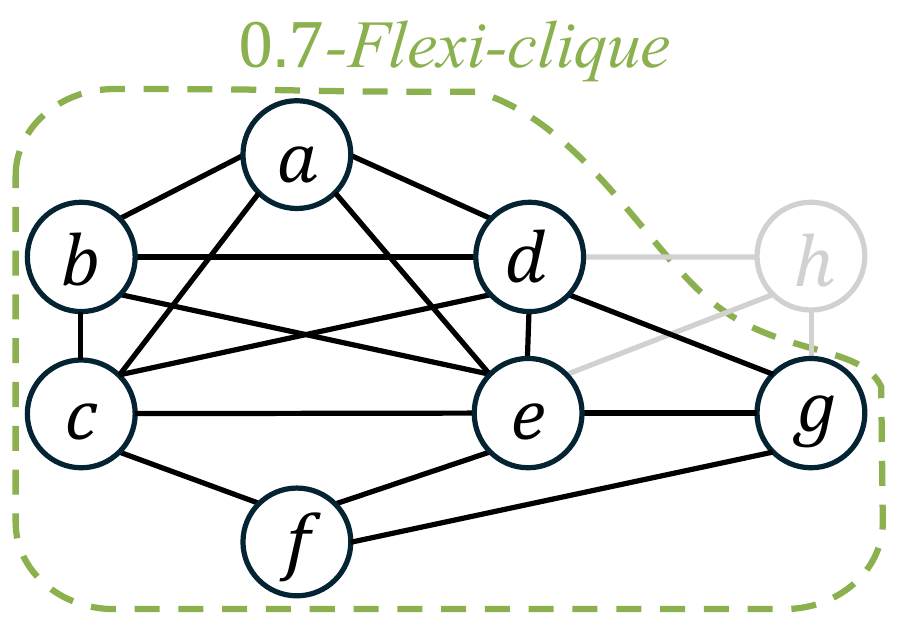}
\caption{Peeling process}
\label{fig:FPA2}
\end{subfigure}
\caption{{\FPA} example $(\tau = 0.75)$}
\label{fig:FPA}
\vspace{-0.2cm} 
\end{figure}

\begin{example}

Figure~\ref{fig:FPA} illustrates the execution of Algorithm~\ref{alg:FPA}. 
The core decomposition identifies the set $\{a,\dots,e\}$ as the $4$-core, which is feasible because $\lfloor 5^{0.75} \rfloor = 3$ and every node in this subgraph meets the {\flexi} degree constraint. In contrast, the $3$-core, consisting of $\{a,\dots,g\}$, violates the condition since at least one node has degree below $4$ while $\lfloor 7^{0.75}\rfloor = 4$. Consequently, the algorithm begins pruning from the $3$-core that contains the feasible {\flexi}. As shown in Figure~\ref{fig:FPA2}, nodes that are not articulation points are removed in ascending order of degree, and the degree constraint is re-evaluated after each deletion. In this example, removing node $f$ (degree $3$) yields a $6$-node subgraph satisfying $\lfloor 6^{0.75}\rfloor = 3$, confirming that the resulting structure is a valid {\flexi}. This example demonstrates how {\FPA} efficiently converges to a feasible subgraph by eliminating peripheral nodes while preserving connectivity.
\end{example}

%% file: 05_algorithm_BB.tex
\begin{table}[t]
\centering
\small
\setlength{\tabcolsep}{6pt}
\caption{Algorithmic challenges across models}
\label{tab:model_comparison1}
\resizebox{\columnwidth}{!}{%
\begin{tabular}{c|c|c|c|c}
\hline
\textbf{Model}
& \textbf{Non-Her.}
& \textbf{Non-Q-Her.}
& \textbf{Conn.}
& \textbf{No Diam. Bd.} \\
\hline\hline
Clique~\cite{bron1973algorithm}
& $\times$ & $\times$ & $\times$ & $\times$ \\ \hline

$k$-plex~\cite{chang2024maximum}
& $\times$ & $\times$ & $\triangle$ & $\triangle$ \\ \hline

QC-DT~\cite{mahdavi2014branch}
& $\bigcirc$ & $\times$ & $\triangle$ & $\triangle$ \\ \hline

QC-DG~\cite{yu2023fast}
& $\bigcirc$ & $\times$ & $\triangle$ & $\triangle$ \\ \hline

Flexi-clique
& $\bigcirc$ & $\bigcirc$ & $\bigcirc$ & $\bigcirc$ \\ \hline
\end{tabular}
}
\vspace{-0.5cm}
\end{table}

\section{Efficient Branch and Bound Algorithm ({\EBA}) }\label{sec:algorithm_BB}

To obtain an exact solution for the maximum {\flexi} problem, we devise an Efficient Branch-and-bound Algorithm (\EBA) that systematically explores the solution space while aggressively pruning unpromising regions to avoid redundant exploration; unlike heuristic methods, {\EBA} guarantees optimality by exhaustive yet structured search under connectivity constraints. At each \textit{search node}  (i.e., a node of the branch-and-bound tree), the algorithm maintains four mutually disjoint sets of graph nodes that organise branching and prevent reconsideration of excluded nodes:

\begin{itemize}[leftmargin=*]
    \item $S$ (partial set): nodes already selected in the current search node: $S$ must induce a connected subgraph, but does not necessarily form a valid {\flexi}.
    \item $C^r$ (reachable candidates): nodes not in $S$ but adjacent to $S$, thus eligible to extend $S$ without breaking connectivity,
    \item $C^{un}$ (unreachable candidates): nodes currently non-adjacent to $S$ and therefore not immediately addable; these may become reachable after $S$ expands, 
    \item $D$ (exclusion set): nodes explicitly ruled out in the current search node and all of its descendants.
\end{itemize}

\subsection{Connectivity-Preserving Branching ({\CP})}\label{sec:subsec5_1} 

Existing cohesive subgraph models typically satisfy structural assumptions such as (quasi-)heredity.  Moreover, connectivity and small diameter are often implicitly guaranteed once the model parameters are fixed. For example, a $k$-plex is guaranteed to be connected when $|H| \ge 2k-1$, and a $\gamma$-quasi-clique is connected when $\gamma \ge 0.5$, and moreover their solutions both have diameter at most $2$ under the same condition~\cite{chang2024maximum,yu2023fast}. Table~\ref{tab:model_comparison1} summarises these algorithmic characteristics across different cohesive subgraph models. Here, $\bigcirc$, $\triangle$, and $\times$ indicate that a given characteristic is intrinsic to the model, arises only under additional parameter conditions, or is absent, respectively.

These properties allow conventional branching frameworks to safely prune the search space: feasibility is preserved under candidate extensions, and the search can be restricted to a local neighbourhood. However, the {\flexi} model violates all of these assumptions. It is neither hereditary nor quasi-hereditary, and admits no parameter-dependent subgraph-size threshold that guarantees connectivity or bounded diameter. As a result, conventional branching strategies may generate disconnected partial sets or fail to explore valid candidates.

To address this challenge, we employ a novel branching strategy called \emph{Connectivity-Preserving Branching} (\CP). The proposed {\CP} framework is explicitly designed for such settings explained above. It maintains connectivity throughout the search by distinguishing reachable candidates ($C^r$) from currently unreachable ones ($C^{un}$), ensuring that every partial solution induces a connected subgraph. This connectivity-aware design enables exhaustive yet safe exploration of the search space and forms the structural basis of {\EBA}. The formal process is defined as follows.

\begin{figure}[t]
\centering
\includegraphics[width=0.9\linewidth]{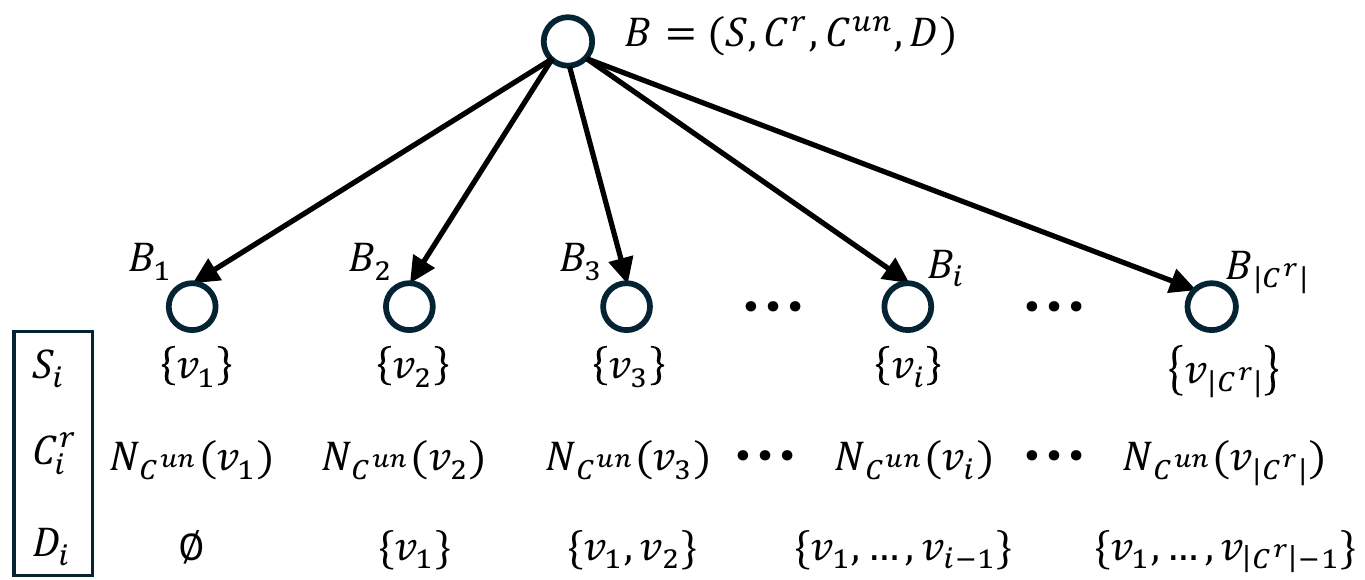}
\caption{Branching process}
\vspace{-0.5cm}
\label{fig:branching}
\end{figure}

For a given graph $G = (V,E)$, the search begins from the root search node $B_0 = (\emptyset, \emptyset, V, \emptyset)$. Each search node $B = (S, C^r, C^{un}, D)$ represents a distinct state of the search tree, where $S$ denotes the current partial set, $C^r$ and $C^{un}$ are the reachable and unreachable candidate sets, and $D$ is the exclusion set. During expansion, a child node is created by selecting a node $v$ to extend $S$. If $S$ is empty (i.e., the root), $v$ is chosen from $C^{un}$; otherwise, $v$ must be selected from $C^r$ to preserve connectivity. In both cases, $v$ must not belong to $D$ so that every node in $G$ is considered  once within the current search node. For given search node $B$, let $v_i$ be the $i$-th valid node satisfying these conditions, the child search node $B_i = (S_i, C^r_i, C^{un}_i, D_i)$ is constructed as follows:
\begin{center}
\vspace{-0.6cm}
\begin{align*} \label{eq:branching}
S_i &= S \cup \{v_i\} \\
C^r_i &= (C^r \setminus \{v_1,\dots,v_i\}) \cup N_{C^{un}}(v_i,G) \\
C^{un}_i &= C^{un} \setminus N_{C^{un}}(v_i,G) \\
D_i &= D \cup \{v_j \mid j \in [1,i-1]\}.
\end{align*}
\end{center}
This construction guarantees that the four sets are mutually disjoint and that previously visited or excluded nodes are never reconsidered. Each child node inherits the connected partial set from its parent, ensuring that connectivity is preserved at every step of the search.

\begin{example}
The {\CP} process is illustrated in Figure~\ref{fig:branching}. A parent search node $B$ generates a number of child search nodes equal to the cardinality of its reachable candidate set $C^r$. For each child search node $B_i$, the partial set $S_i$ is obtained by adding a distinct node $v_i$ from the parent’s candidate set, while the reachable candidate set $C^r_i$ is updated to include the neighbours of $v_i$ that were previously in $C^{un}$. 
Any node incorporated into a partial set at one level is simultaneously added to the exclusion set of all following sibling search nodes, ensuring that no node is reconsidered during later expansions. 
This demonstrates how {\CP} expands the search space in a connectivity-preserving manner while preventing redundant exploration.
\end{example}

The branching structure exhibits two key monotonic properties that underpin the pruning framework used in {\EBA}:
\begin{itemize}[leftmargin=*]
\item \textbf{Monotonic exclusion:} The exclusion set grows monotonically along the search tree. For sibling search nodes $B_i$ and $B_j$ where $i<j$, we have $D_i \subset D_j$, and for any parent–child relation, $D_{\text{parent}} \subseteq D_{\text{child}}$. Once a node is excluded at any stage, it remains excluded in all descendant and following sibling search nodes. This property ensures that pruning decisions are inherited by all descendants, eliminating redundant exploration.
\item \textbf{Monotonic expansion:} For every child search node, the partial set expands that of its parent, \textit{i.e.}, $S_{\text{parent}} \subseteq S_{\text{child}}$. This guarantees that the search progresses toward larger candidate subgraphs and enables branching-level pruning based on the current partial size and degree thresholds.
\end{itemize}

Branching on a search node is terminated when its partial set $S$ can no longer be expanded without violating connectivity-specifically, when the reachable candidate set $C^r$ becomes empty. However, as the number of possible connected subgraphs grows exponentially with graph size, direct enumeration quickly becomes infeasible. To address this, {\EBA} incorporates an extensive pruning framework that eliminates search nodes incapable of producing larger or feasible {\flexi} subgraphs. Throughout the search, the algorithm maintains the largest solution identified so far, $F'$, and uses degree, distance, and local structural constraints to discard unpromising regions early. This motivates the following pruning strategies.

\subsection{Pruning strategies: Degree-based pruning}\label{subsec:5_2}

As {\CP} guarantees exhaustive exploration of all connected subgraphs, efficient pruning is essential to make {\EBA} practical. This section introduces pruning strategies derived from the degree properties of {\flexi}. The main idea is to identify nodes that cannot participate in any {\flexi} larger than the current best solution $F'$, thereby allowing entire search nodes to be safely discarded. We begin by establishing a theoretical lower bound on node degree for subgraphs larger than $F'$, derived directly from the definition of {\flexi}.

\begin{lemma}\label{lem:degree_bound}
Given a graph $G=(V,E)$, a parameter $\tau \in [0,1)$, and a {\flexi} $H \subseteq V$, for any node $x \in H$, the size of $H$ is bounded as follows:
\begin{align*}
|H| \le \lfloor (d(x, G) + 1)^{1/\tau} \rfloor.
\end{align*}
\end{lemma}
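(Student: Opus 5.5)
The plan is to chain the degree condition of Definition~\ref{def:flexi} with the fact that degrees can only grow when passing from an induced subgraph to the whole graph, and then invert the threshold $\lfloor |H|^\tau \rfloor$ through the floor function. No earlier theorem is needed beyond the definition itself.

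\textbf{Step 1: degree inequality.} Fix a {\flexi} $H$ and a node $x \in H$. Since $G[H]$ is an induced subgraph of $G$, we have $N(x, G[H]) \subseteq N(x, G)$. Combining this with the first condition of Definition~\ref{def:flexi} gives
\begin{align*}
\lfloor |H|^\tau \rfloor \le \delta(G[H]) \le d(x, G[H]) \le d(x, G).
\end{align*}

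\textbf{Step 2: removing the floor.} Set $d = d(x,G)$. Because $d$ is an integer, $\lfloor |H|^\tau \rfloor \le d$ is equivalent to $|H|^\tau < d+1$. This step is the one I would handle most carefully, since the floor is on the wrong side for a direct inversion.

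\textbf{Step 3: inversion.} For $\tau \in (0,1)$, the map $t \mapsto t^{1/\tau}$ is strictly increasing on $[0,\infty)$. Applying it to the inequality from Step 2 yields
\begin{align*}
|H| < (d+1)^{1/\tau}.
\end{align*}
Since $|H|$ is an integer, it is at most the largest integer not exceeding $(d+1)^{1/\tau}$, namely $\lfloor (d+1)^{1/\tau} \rfloor$. In fact the strict inequality gives slightly more than the lemma needs, so the stated non-strict bound follows immediately.

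\textbf{Degenerate case $\tau = 0$.} Here the threshold is $\lfloor |H|^0 \rfloor = 1$ and the exponent $1/\tau$ is undefined. I would remark that the bound is read as $(d+1)^{1/\tau} = \infty$ when $d \ge 1$, so it is vacuous. The only other subcase is $d = 0$, which forces $|H| = 1$, consistent with interpreting $1^{\infty} = 1$. With this convention the lemma is just the contrapositive statement used in pruning: a node of degree $d$ cannot belong to any {\flexi} larger than $\lfloor (d+1)^{1/\tau} \rfloor$.
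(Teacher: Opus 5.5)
Your proof is correct and follows the paper's own argument: the same chain $\lfloor |H|^\tau \rfloor \le \delta(G[H]) \le d(x,G)$ followed by inversion. Your Steps 2--3 just spell out what the paper compresses into ``rearranging'', namely removing the floor via $\lfloor y \rfloor \le d \iff y < d+1$ and then using integrality of $|H|$. Your $\tau = 0$ remark goes beyond the paper, which ignores that case; one correction there: the subcase $d=0$ is vacuous, since $\lfloor |H|^\tau \rfloor \ge 1$ always and so no {\flexi} contains a degree-$0$ node, rather than a subcase that forces $|H|=1$.
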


\begin{proof}
From the definition of {\flexi}, the following inequality holds for any node $x \in H$:
\[
\lfloor |H|^{\tau} \rfloor \le \delta(G[H]) \le d(x, G).
\]
Rearranging gives
\[
|H| \le (d(x, G) + 1)^{1/\tau}. 
\]
\end{proof}

Lemma~\ref{lem:degree_bound} implies that nodes with insufficient degree cannot contribute to any {\flexi} larger than the current best solution $F'$. To formalise this, we define the minimum degree threshold $\theta_{F'}$ as the smallest value such that any node with degree less than $\theta_{F'}$ cannot belong to a {\flexi} subgraph larger than $|F'|$:
\begin{align*}
\theta_{F'} = \left\lceil (|F'| + 1)^{\tau} - 1 \right\rceil.
\end{align*}

With this threshold, we define the \textit{adjusted degree}, which represents the effective degree of a node within the current search node $B = (S, C^r, C^{un}, D)$. It counts the number of neighbours of a node that remain within the current search scope (i.e., not yet excluded by $D$).

\begin{definition}[Adjusted degree]
Given a graph $G = (V, E)$, a node $v \in V$, and a search node $B = (S, C^r, C^{un}, D)$, the adjusted degree of node $v$ in $B$ is defined as:
\begin{align*}
\label{eq:adjusted_degree}
d_a(v, G, B) = d(v, G[S \cup C^r \cup C^{un}]).
\end{align*}
\end{definition}

When the context is clear, we simply write $d_a(v)$. Combining $\theta_{F'}$ and the adjusted degree yields the following degree-based pruning rule.

\begin{figure}[t]
\centering
\includegraphics[width=0.6\linewidth]{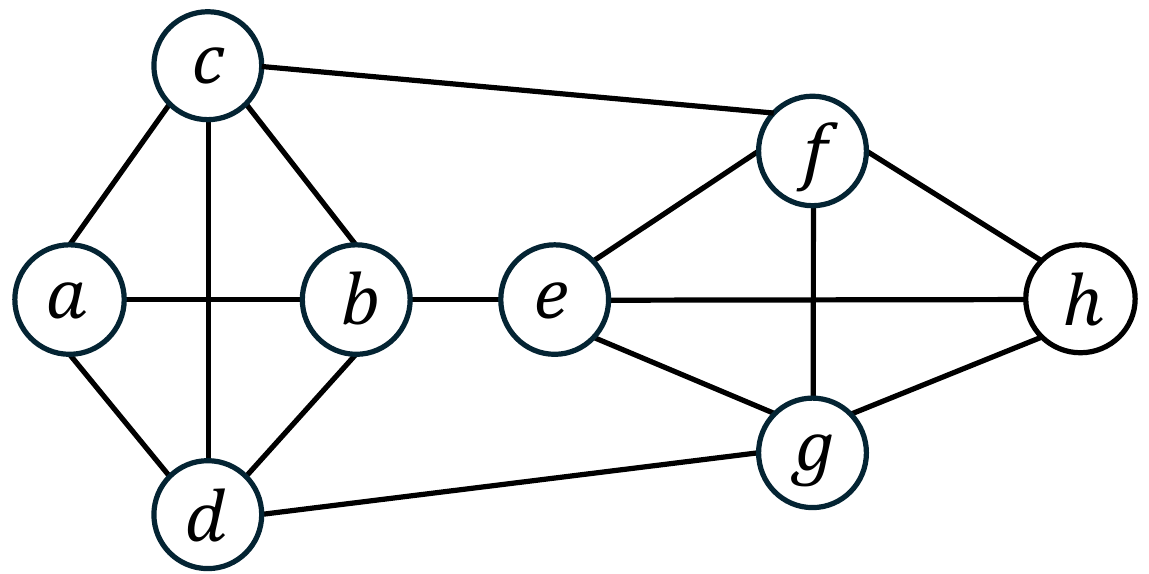}
\caption{Toy example}
\vspace{-0.7cm} 
\label{fig:toy}
\end{figure}

\begin{rules}\label{rule:rule1}
Given $F'$ the largest {\flexi} identified so far and a search node $B = (S, C^r, C^{un}, D)$ with $S \neq \emptyset$, if the minimum adjusted degree among nodes in $S$ is smaller than the threshold $\theta_{F'}$, i.e., $\min_{v \in S} d_a(v) < \theta_{F'},$ then the search node $B$ can be safely pruned, because no extension of $S$ can yield a {\flexi} larger than $F'$.
\end{rules}

Rule~\ref{rule:rule1} eliminates infeasible search nodes by verifying whether all nodes in the current partial set meet the minimum degree requirement implied by $\theta_{F'}$. This rule works synergistically with the monotonic exclusion property: once a search node is pruned by Rule~\ref{rule:rule1}, all its descendants inherit the same exclusion set and can be skipped. Moreover, sibling search nodes that share the same infeasible partial set in their parent can also be omitted without expansion.

Furthermore, an additional pruning can be obtained by deriving an upper bound on the possible {\flexi} size. 

\begin{rules}\label{rule:rule2}
For a search node $B = (S, C^r, C^{un}, D)$ with $S \neq \emptyset$, and a parameter $\tau \in [0,1)$, let 
$F_{\max} = \left\lfloor (\min_{v \in S} d_a(v) + 1)^{1/\tau} \right\rfloor$ be the maximum size of a {\flexi} achievable from $B$. If $|S| > F_{\max}$ or $|S| = F_{\max}$ and $S$ does not form a valid {\flexi}, the search node $B$ can be safely pruned.
\end{rules}

Rule~\ref{rule:rule2} terminates search nodes whose partial solution has already reached its theoretical size bound without satisfying the {\flexi} condition, thus avoiding futile expansion.

To maximise the effectiveness of the above pruning rules, {\EBA} determines the branching order of candidate nodes as follows. For a search node $B = (S, C^r, C^{un}, D)$, if $S$ is empty, the nodes in $C^{un}$ are sorted in ascending order of degree, and the algorithm expands the first node accordingly. Otherwise, if $S$ is not empty, when a node $v\in C^r$ is inserted into $S$, its neighbours from $C^{un}$ are moved to $C^r$ while maintaining the degree-ascending order of $C^r$. 

This ordering enables early pruning through two complementary effects. First, by placing low-degree nodes earlier in $C^r$, their neighbours are excluded in the initial branching steps, and these low-degree nodes are quickly identified as unpromising due to insufficient remaining neighbours to meet degree constraints. Second, deferring high-degree nodes until later in the branching order increases pruning efficiency. By the time a high-degree node is added to $S$, many of its neighbours have already been excluded due to earlier branching, leaving a smaller set of remaining candidates.

\begin{figure}[t]
\centering
\includegraphics[width=0.99\linewidth]{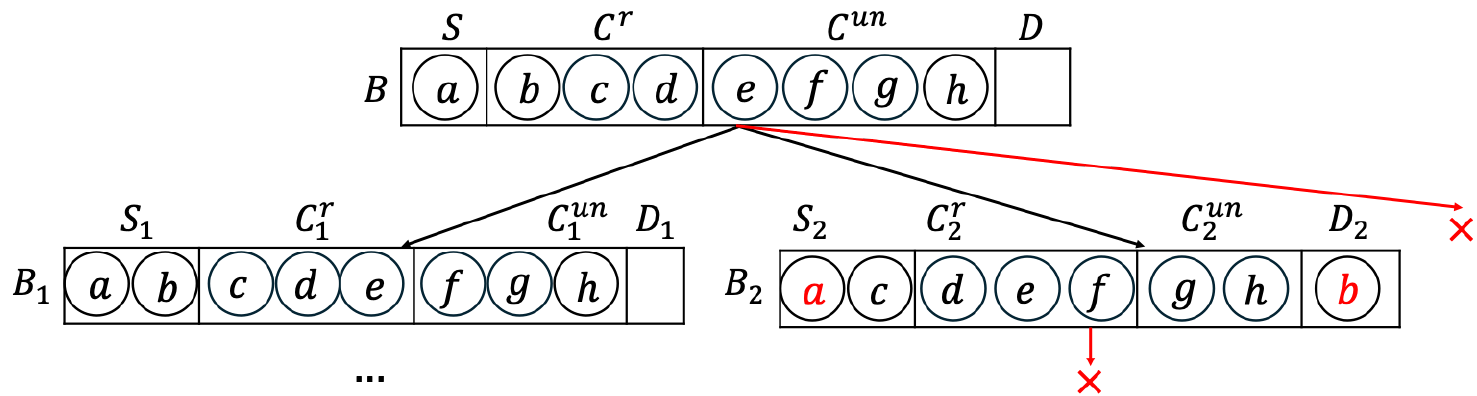}
\vspace{-0.3cm}
\caption{Degree-based pruning}
\vspace{-0.8cm} 
\label{fig:prune2}
\end{figure}

\begin{example}
Figure~\ref{fig:prune2} illustrates the degree-based pruning process on the graph in Figure~\ref{fig:toy}. Consider the case where $\tau = 0.75$ and the current best {\flexi} $F'$ has size $6$. Therefore, $\theta_{F'} = \left\lceil (6 + 1)^{0.75} - 1 \right\rceil = 3$. For the search node $B = (\{a\}, \{b, c, d\}, \{e, f, g, h\}, \emptyset)$, the partial set of its second child search node $S_2$ includes node $c$ and moves node $b$ to the exclusion set $D_2$. In this search node, the adjusted degree of node $a$ becomes $2$, which is smaller than $\theta_{F'} = 3$. Consequently, this search node is deemed infeasible and is pruned. By the monotonic exclusion property, this pruning effect propagates to all descendant, and following sibling search nodes that inherit the same infeasible partial set from their parent can also be omitted without explicit expansion. 
\end{example}

\subsection{Pruning strategies: Diameter-based pruning}

This pruning strategy removes infeasible candidate nodes within each search node by combining degree and distance constraints of the induced subgraph. The key idea is that when two nodes are far apart, satisfying the minimum degree requirement necessitates a larger number of intermediate nodes. This relationship defines a lower bound on the subgraph size required to include both nodes. When this bound exceeds the size of the current candidate scope, the corresponding node can be safely discarded since no valid {\flexi} can be formed under such conditions. The diameter of a graph $G=(V,E)$ is defined as the maximum shortest-path distance between any pair of nodes in $G$, that is, $\max\{\operatorname{sp}_G(u,v)\mid u,v\in V\}$.

\begin{lemma}\label{lem:degree_diameter}
Connected graph of minimum degree $k \ge 1$ and diameter $L \ge 1$ must contain at least $n(k,L)$ nodes, where
\begin{align*}
n(k,L) =
\begin{cases}
k + L, & \hspace{-2em}\text{if } 1 \le L \le 2 \text{ or } k = 1,\\[3pt]
k + L + 1 + \lfloor \frac{L}{3} \rfloor (k - 2), & \text{otherwise.}
\end{cases}
\end{align*}
\end{lemma}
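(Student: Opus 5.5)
The plan is to use a breadth-first layering from one endpoint of a diametral path and lower-bound the sizes of consecutive layers using the minimum-degree condition. Let $G$ be connected with $\delta(G)\ge k$ and diameter $L$. Pick nodes $u,w$ with $\operatorname{sp}(u,w)=L$. Define the layers $V_i=\{x : \operatorname{sp}(u,x)=i\}$ for $i=0,\dots,L$.

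Two facts drive the argument:
\begin{enumerate}
\item Every $V_i$ with $0\le i\le L$ is nonempty, since a shortest $u$--$w$ path meets each of them.
\item A node of $V_i$ has all its neighbours in $V_{i-1}\cup V_i\cup V_{i+1}$, by the triangle inequality for BFS distances.
\end{enumerate}
From (2), for any $x\in V_i$ the set $\{x\}\cup N(x)$ lies in those three layers and has size at least $k+1$. Hence
\begin{align*}
|V_{i-1}|+|V_i|+|V_{i+1}|\ge k+1 \quad \text{for } 1\le i\le L-1,
\end{align*}
where the sum uses only the layers that exist. At the two ends this sharpens to $|V_0|+|V_1|\ge k+1$, applying the argument to $u$, and $|V_{L-1}|+|V_L|\ge k+1$, applying it to $w$, since there is no layer beyond $V_L$.

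\emph{First case: $1\le L\le 2$ or $k=1$.} Use the block $V_0\cup V_1$, which has size at least $k+1$. The remaining layers $V_2,\dots,V_L$ are $L-1$ nonempty layers and contribute at least one node each. This gives $n\ge k+1+(L-1)=k+L$.

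\emph{Second case: $L\ge 3$ and $k\ge 2$.} Here $V_0\cup V_1$ and $V_{L-1}\cup V_L$ are disjoint blocks, each of size at least $k+1$. The $L-3$ middle layers $V_2,\dots,V_{L-2}$ are split as follows. Write $f=\lfloor L/3\rfloor$.
\begin{itemize}
\item Take $f-1$ disjoint runs of three consecutive layers. Each run has the form $V_{i-1}\cup V_i\cup V_{i+1}$ with $2\le i-1$ and $i+1\le L-2$, so each contributes at least $k+1$.
\item The remaining $L-3-3(f-1)=L-3f$ layers contribute at least $1$ each.
\end{itemize}
Summing,
\begin{align*}
n\ge (f+1)(k+1)+L-3f = k+L+1+f(k-2),
\end{align*}
which is exactly $n(k,L)$.

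I expect the main obstacle to be the bookkeeping in the second case rather than any deep idea. One must check that the triples can be placed entirely inside $V_2,\dots,V_{L-2}$, so they do not overlap the end blocks; that $L-3-3(f-1)\ge 0$ holds for every $L\ge 3$; and that the algebra matches the stated formula. In the first case one should also confirm that using only the single block at $V_0$ is enough. When $k=1$ the triple grouping gains nothing over counting layers, which is why that case is separated. When $L\le 2$ there is no room for a second disjoint end block, so again only one block is used.
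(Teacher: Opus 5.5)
Your proof is correct, and the bookkeeping you flagged does check out. The runs $V_{3r-1}\cup V_{3r}\cup V_{3r+1}$ for $r=1,\dots,f-1$ end at $V_{3f-2}$, which lies within $V_2,\dots,V_{L-2}$ because $3f\le L$. These runs are disjoint from the two end blocks, which are themselves disjoint since $L\ge 3$. The $L-3f\ge 0$ leftover middle layers are nonempty, and $(f+1)(k+1)+L-3f=k+L+1+f(k-2)$.

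The paper gives no proof of its own to compare against; it imports the bound from Yao and Chang~\cite{yao2021efficient}. Your breadth-first layering argument is the standard technique behind classical bounds of the form $\operatorname{diam}(G)\le 3n/(\delta(G)+1)+O(1)$. Its core step is that the closed neighbourhood of any node of $V_i$ lies in $V_{i-1}\cup V_i\cup V_{i+1}$, sharpened to two layers at $u$ and at $w$. What it buys over the citation is a self-contained verification that makes the hypotheses explicit.

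In particular, the only place you use that $L$ is the diameter, rather than merely $\operatorname{sp}(u,w)$, is the end block at $w$. If layers beyond $V_L$ exist, the triple $V_{L-1}\cup V_L\cup V_{L+1}$ still contains the closed neighbourhood of $w$ and stays disjoint from the other blocks. So the same bound holds for every connected graph of minimum degree at least $k$ and diameter at least $L$, which is the form in which Rule~\ref{rule:rule3} actually invokes the lemma.
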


Lemma~\ref{lem:degree_diameter} provides a lower bound on the number of nodes required to maintain both degree and diameter constraints~\cite{yao2021efficient}. This bound can be directly applied to {\EBA} by evaluating whether the current candidate is sufficiently large to contain any valid {\flexi}. If the bound exceeds the number of available nodes in the current search node, no feasible subgraph can exist, leading to the following pruning rule.

\begin{rules}\label{rule:rule3}
Let $B = (S, C^r, C^{un}, D)$ be the current search node, and let $H = S \cup C^r \cup C^{un}$ with induced subgraph $G[H]$. 
Let $F'$ denote the largest {\flexi} identified so far and $\theta_{F'}$ the corresponding minimum degree threshold. 
For any candidate node $u \in C^r \cup C^{un}$, if
\begin{align*}
n\!\left(\theta_{F'}, \max_{v \in S}\operatorname{sp}_{G[H]}(u,v)\right) > |H|,
\end{align*}
then $u$ can be safely pruned from $B$, because no connected subgraph within $H$ can simultaneously satisfy both the minimum degree $\theta_{F'}$ and the diameter constraints.
\end{rules}

This pruning rule is valid because if the number of nodes required to form a connected subgraph with minimum degree $\theta_{F'}$ and diameter at least $\operatorname{sp}_{G[H]}(u,v)$ exceeds the maximum available solution size $|H| = |S \cup C^{r} \cup C^{un}|$, then node $u$ cannot participate in any feasible {\flexi} within the current search node. 

Building upon this degree–diameter relationship, an additional pruning condition can be derived by directly incorporating the degree constraint of {\flexi}, leading to a more localised feasibility check as described next.

\begin{figure}[t]
\centering
\includegraphics[width=0.7\linewidth]{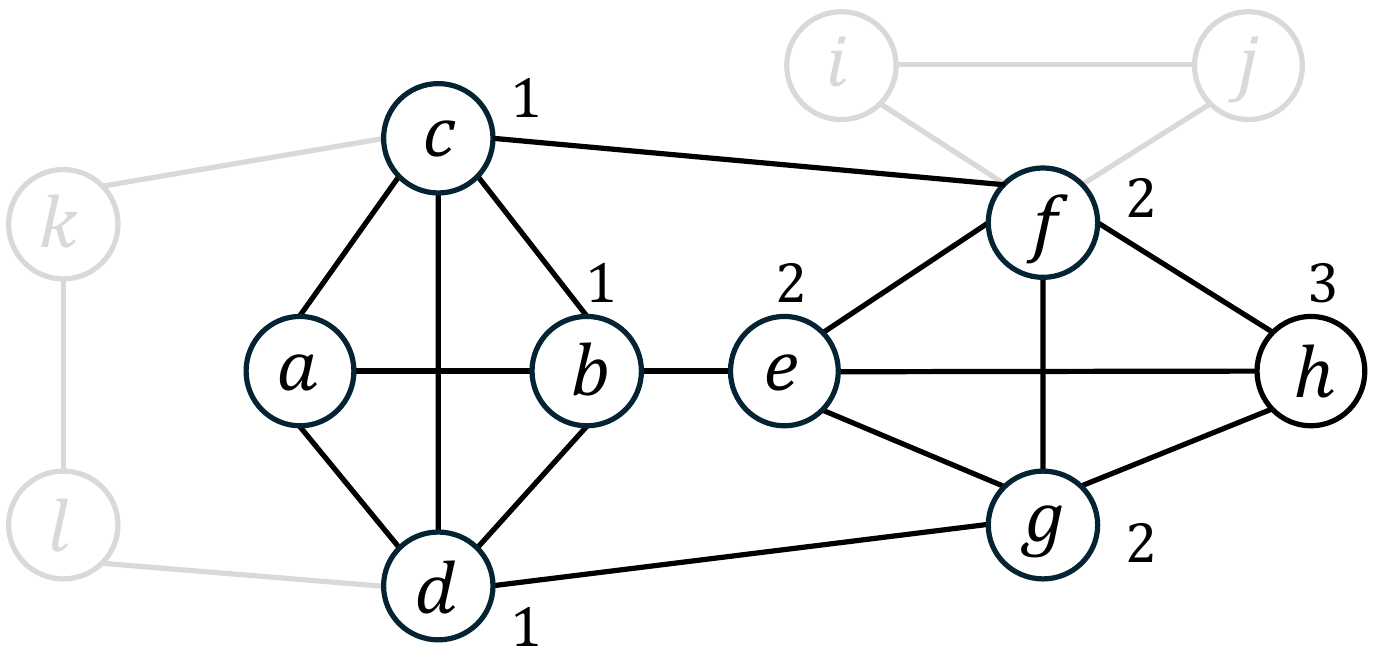}
\caption{Pruning example}
\vspace{-0.5cm} 
\label{fig:prune1}
\end{figure}

\begin{rules}\label{rule:rule4}
Let $B = (S, C^r, C^{un}, D)$ be the current search node,  parameter $\tau$, and let $H = S \cup C^r \cup C^{un}$ with induced subgraph $G[H]$. 
Given the current best {\flexi} $F'$ and $\theta_{F'}$ the corresponding minimum degree threshold, 
a candidate node $u \in C^r \cup C^{un}$ is pruned if there exists at least one node $v \in S$ such that
\begin{align*}
\left\lfloor 
n\!\left(
\theta_{F'}, \max_{x \in S} \operatorname{sp}_{G[H]}(u,x)
\right)^{\tau}
\right\rfloor > d_a(v,G).
\end{align*}
\end{rules}

This rule eliminates candidate nodes whose inclusion would require existing nodes in the partial set $S$ to exceed their maximum possible degrees, 
thereby preventing infeasible subgraph expansions.

\begin{remark}
Let $H = S \cup C^r \cup C^{un}$, with induced subgraph $G[H]$. Consider a candidate $u \in C^r \cup C^{un}$ and a node $v \in S$. Define $M$ as the smallest induced subgraph of $G[H]$ that contains $S \cup \{u\}$ and preserves the shortest-path distance between $u$ and $v$, that is, $\operatorname{sp}_M(u,v) = \operatorname{sp}_{G[H]}(u,v)$.  
If the degree required for $M$ to satisfy the {\flexi} condition exceeds the maximum possible degree of $v$, namely  
\begin{align*}
\lfloor n(\theta_{F'}, \operatorname{sp}_M(u,v))^{\tau} \rfloor > d_a(v, G),
\end{align*}
then $u$ and $v$ are mutually incompatible and cannot co-exist in any valid {\flexi} within the current search node.
\end{remark}

\spara{Incremental maintenance of distances.} To avoid all-sources recomputation at every search node, we maintain for each $u \in H= C^r \cup C^{un}$ the quantity
\begin{align*}
\text{dist}(u) = \max_{v \in S} \operatorname{sp}_{G[H]}(u,v),
\end{align*}
which represents the farthest distance from $u$ to any node in the current partial set $S$. When moving from a parent to a child search node by adding a new node $w$ to $S$, a single-source shortest path computation is executed from $w$ in $G[H]$. For each node $u \in H$, the recorded distance is updated as
\begin{align*}
\text{dist}(u) = \max\{\text{dist}(u), \operatorname{sp}_{G[H]}(w,u)\}.
\end{align*}
This incremental update preserves the distance information required for the diameter-based pruning rules while avoiding repeated all-sources computations. This directly provides the distance term required by Rule~\ref{rule:rule3} and Rule~\ref{rule:rule4}, since both rules depend on the maximum shortest-path distance from a candidate to the current partial set.

\begin{example}
Figure~\ref{fig:prune1} illustrates diameter-based pruning. The number above each node represents its shortest-path distance from node~$a$. Suppose $\tau = 0.75$ and the largest {\flexi} found so far, $F'$, has size $6$, giving the minimum degree threshold $\theta_{F'} = 3$. In the search node $B = (\{a\}, \{b, c, d\}, \{e, f, g, h\}, \emptyset)$, nodes $e$, $f$, and $g$ are each at distance $2$ from node $a$. According to Lemma~\ref{lem:degree_diameter}, a subgraph satisfying $\theta_{F'}=3$ and distance $2$ must contain at least $5$ nodes; since these nodes meet this requirement with degrees $\lfloor 5^{0.75} \rfloor = 3$, they remain as candidates. In contrast, node $h$ is at distance $3$ from node $a$, which requires a minimum subgraph size of $8$ to satisfy the degree condition. The degree of node $a$, $d(a) = 3$, cannot meet this stricter requirement $\lfloor 8^{0.75} \rfloor = 4$, so $h$ is safely pruned from the current search node.
\end{example}

\subsection{Pruning strategies: Follower-based pruning}
Building on the degree-based pruning rules, we further reduce redundant exploration by exploiting the cascading effect of node exclusion. 
When a node from the partial set $S$ is moved to the exclusion set $D$ in a sibling search node, this exclusion can lower the adjusted degrees of its neighbours. As a result, other nodes that previously satisfied the degree threshold $\theta_{F'}$ may now fall below it, becoming infeasible for any larger {\flexi}. We refer to these dependent nodes as the \textit{followers} of the excluded node.

\begin{definition}[Follower]
Given $F'$ the largest {\flexi} identified so far and its corresponding minimum degree threshold $\theta_{F'}$, let $B = (S, C^r, C^{un}, D)$ be a search node. 
When a node $v \in S$ is moved to the exclusion set $D$, any node $u \in (S \cup C^r \cup C^{un}) \setminus \{v\}$ such that
\begin{align*}
d_a(u, G, B \setminus \{v\}) < \theta_{F'}
\end{align*}
is defined as a \textit{follower} of $v$, denoted by $F_v$.
\end{definition}

By Lemma~\ref{lem:degree_bound}, every node participating in a {\flexi} larger than $F'$ must have degree at least $\theta_{F'}$. Hence, when a node's adjusted degree falls below $\theta_{F'}$ after exclusion of its neighbour, it can no longer belong to any valid {\flexi} larger than $F'$. Thus, we can derive the following pruning rule:

\begin{rules}
Let $F'$ denote the largest {\flexi} identified so far, and $\theta_{F'}$ the corresponding minimum degree threshold. For a search node $B_i = (S_i, C^r_i, C^{un}_i, D_i)$, when a node $v \in S_i$ is transferred to the exclusion set of its sibling search node $B_{i+1}$, all of its followers $F_v$ can be removed from the sibling search node and inserted into exclusion set $D_{i+1}$.
\label{rule:rule5}
\end{rules}

The rule is applied iteratively: whenever new followers are excluded, their removal may generate additional followers whose adjusted degrees are less than $\theta_{F'}$. This process continues until no further followers are produced, yielding a stable state in which every remaining node in the search node potentially satisfies the degree threshold. Thus, it can achieve substantial reductions in the size of each search node and improve overall pruning efficiency.

\subsection{Pruning strategies: Heuristic-based pruning}

While the preceding rules effectively remove numerous unpromising candidate nodes and search nodes, initiating the search with $F' = \emptyset$ across the entire graph incurs significant computational overhead, as the algorithm must explore an excessively large search space before finding any solution. To handle this inefficiency, we employ a heuristic-based pruning strategy with {\FPA}, which rapidly identifies a feasible {\flexi} to initialise $F'$. This early estimate enables the efficient removal of nodes that cannot participate in any larger {\flexi}, thereby reducing the initial search space.

By simplifying the inequality in Lemma~\ref{lem:degree_bound}, we directly derive the following condition:
\begin{align*}
&(d(v, G) + 1)^{1/\tau} < |F'| + 1 \\
&\Rightarrow d(v, G) < \lceil(|F'| + 1)^{\tau} - 1\rceil = \theta_{F'}.
\end{align*}

\begin{rules}\label{rule:rule6}
Given a graph $G = (V, E)$, a parameter $\tau \in [0,1)$, and the largest {\flexi} $F'$ identified so far, a node $v \in V$ can be safely discarded if $d(v, G) < \theta_{F'}$.
\end{rules}

Any node $v \in V$ with $d(v, G) < \theta_{F'}$ can therefore be excluded from the initial search space, and this filtering is re-applied whenever $\theta_{F'}$ is updated. In summary, $\theta_{F'}$ serves as a dynamic lower bound on node degree. It enables early elimination of infeasible nodes before exact enumeration and allows progressively stronger pruning as larger {\flexi} solutions are identified, thereby reducing the effective search space throughout the enumeration process.

\begin{example}
Figure~\ref{fig:prune1} illustrates how Rule~\ref{rule:rule6} is applied when $|F'| = 6$ and $\tau = 0.75$. To form a {\flexi} larger than $|F'|$, each node in the subgraph must have degree at least $\theta_{F'} = \lceil(6 + 1)^{0.75} - 1\rceil = 3$. Consequently, nodes with degree $2$, shaded in grey in the figure, can be safely pruned from the search space.
\end{example}

\SetAlgoNoEnd
\begin{algorithm}[t]
\footnotesize
\SetAlgoLined
\KwIn{Graph $G = (V, E)$, parameter $\tau$}
\KwOut{Maximum {\flexi} $F'$}
\SetKw{return}{return}
\SetKw{OR}{OR}
\SetKw{and}{and}
\SetKw{break}{break}
\SetKw{continue}{continue}
\SetKwData{false}{False}
\SetKwData{true}{True}
\SetKwFunction{approxDensest}{2-approx-densest-subgraph}
\SetKwFunction{pruneByDegree}{degreePrune}
\SetKwFunction{isFlexi}{isflexi}
\SetKwFunction{checkUpperBound}{checkUpperBound}
\SetKwFunction{selectNode}{selectNode}
\SetKwFunction{updateBestSolution}{updateBestSolution}
\SetKwFunction{checkFeasibility}{checkFeasibility}
\SetKwFunction{applyRules}{diaPrune}
\SetKwFunction{EBARec}{EBA-Recursive}
\SetKwFunction{addFollower}{addFollower}
\SetKwFunction{computeFollower}{computeFollower}
\SetKwFunction{checkaAdjDegree}{checkaAdjDegree}
\SetKwFunction{fpa}{FPA}
\SetKwIF{If}{ElseIf}{Else}{if}{:}{else if}{else}{}
\SetKwFor{While}{while}{:}{}

$F' \leftarrow$ \fpa{$G, \tau$}\;
$\theta_{F'} \leftarrow  \left\lceil (|F'| + 1)^\tau - 1 \right\rceil$ \;
$V' \leftarrow$ \pruneByDegree{$G, \theta_{F'}$} \tcp*{Rule~\ref{rule:rule6}}
\EBARec{$\emptyset, \emptyset, V', \emptyset$}\;
\return $F'$\;

\BlankLine
\SetKwProg{Proc}{Procedure}{}{}
\Proc{\EBARec{$S, C^r, C^{un}, D$}}{
    
    \If{$\min_{v \in S} d_a(v) < \theta_F'$}{ 
        \return \tcp*{Rule~\ref{rule:rule1}}
    }

    \If{\isFlexi{$S,\tau$}=$\false$ \and $|S| \geq \lfloor (\min\limits_{v \in S} d_a(v) + 1)^{1/\tau} \rfloor$}{ 
    \return \tcp*{Rule~\ref{rule:rule2}}
    }

    $Fol \leftarrow \emptyset$\;
    Create search nodes $\{B_1, B_2, \ldots, B_{|C^r|}\}$ for each $v \in C^r$\;
    \For{each branch $B_i$}{
        $D_i \leftarrow D_i \cup Fol$\; 
        \If{$|S_i| > |F'|$ \and \isFlexi{$S_i, \tau$}}{
            $F' \leftarrow S_i$\;
            $\theta_F' \leftarrow  \left\lceil (|F'| + 1)^\tau - 1 \right\rceil$ \;
            
            $V' \leftarrow$ \pruneByDegree{$G, \theta_{F'}$} \tcp*{Rule~\ref{rule:rule6}}
        }

        $Fol \leftarrow$ \computeFollower{$G, B_i, v_i$} \tcp*{Rule~\ref{rule:rule5}}
        
        $C^r_i \leftarrow C^r \cup N_{C^{un}}(v_i) \setminus (S_i\cup D_i)$\;
        $C^{un}_i \leftarrow C^{un} \setminus (S_i\cup C^r_i \cup D_i)$\;
        
        $C^r, C^{un}_i \leftarrow$ \applyRules{$S_i,C^r_i,C^{un}_i,\theta_{F'}$} \tcp*{Rule~\ref{rule:rule3}\&\ref{rule:rule4}}

        \If{\checkaAdjDegree{$\theta_{F'}$,$S_i, C^r_i, C^{un}_i$} = \false}{
            \continue \tcp*{Rule~\ref{rule:rule1}}
        }
        
        \EBARec{$S_i, C^r_i, C^{un}_i, D_i$}\;
    }
}
\caption{\mbox{\FuncSty{{\EBA}}: Efficient branch and bound algorithm}}
\label{alg:EBA}
\end{algorithm}
\vspace{-1.0em}

\subsection{Algorithmic procedure}

The overall workflow of {\EBA} is summarised in Algorithm~\ref{alg:EBA}, which systematically integrates {\CP} with all proposed pruning strategies. The algorithm first computes a feasible {\flexi} to initialise the lower bound $F'$ using the heuristic {\FPA}, which provides an initial degree threshold $\theta_{F'}$. Subsequently, Rule~\ref{rule:rule6} (heuristic-based pruning) is applied to immediately exclude nodes that do not satisfy the minimum degree threshold, resulting in a reduced search space. The search then proceeds from the root search node $(\emptyset, \emptyset, V', \emptyset)$, recursively exploring the search tree.

At each recursive step, a search node $(S, C^r, C^{un}, D)$ is pruned if (i) the minimum adjusted degree in $S$ fails to satisfy $\theta_{F'}$ (Rule~\ref{rule:rule1}), (ii) the partial set $S$ has already reached its possible maximum size while remaining invalid (Rule~\ref{rule:rule2}), or (iii) the node cannot be expanded further. Otherwise, the search node generates child nodes according to Section~\ref{sec:subsec5_1}, each corresponding to the inclusion of a new candidate.  

For each child $B_i$, followers of preceding sibling search nodes are excluded by Rule~\ref{rule:rule5}, and if the partial solution $S_i$ exceeds the current best solution and satisfies the {\flexi} condition, the algorithm updates the best-so-far solution and reapplies Rule~\ref{rule:rule6} to remove nodes that have become unpromising. In the next step, followers of the node added in the current partial set are computed, and diameter-based rules (Rules~\ref{rule:rule3} and~\ref{rule:rule4}) filter out nodes that are too distant to form a valid {\flexi}. Search nodes that pass these checks are explored recursively.

%% file: 06_experiments.tex
\section{Experiments}\label{sec:experiments}

In this section, we present a comprehensive experimental evaluation to demonstrate the effectiveness and efficiency of the proposed algorithms. Specifically, we aim to answer the following evaluation questions (EQs):

\begin{itemize}[leftmargin=*] 
    \item \textbf{EQ1. Effectiveness:} How good is the solution quality of the new heuristic algorithm ({\FPA})? 
    \item \textbf{EQ2. Efficiency:} How do the proposed methods perform in terms of computational time in diverse real-world networks? 
    \item \textbf{EQ3. Effect of $\tau$:} How does the parameter $\tau$ influence running time performance and solution size? 
    \item \textbf{EQ4. Synthetic analysis:} How well do our algorithms scale with increasing graph size and varying average degree? 
    \item \textbf{EQ5. Ablation study:} What is the individual contribution of each pruning rule to overall algorithm performance? 
    \item \textbf{EQ6. Case study:} How effectively can {\flexi} identify meaningful communities in real-world networks?
\end{itemize}

\subsection{Experiment setting}\label{subsec:6_1}

\spara{Real-world datasets.} 
We use 9 real-world networks from KONECT~\cite{konect}, covering various domains such as social, collaboration, and communication networks. Table~\ref{tab:data} summarises their key statistics. ``Avg.~Deg'' denotes the average degree, and ``CC'' represents the global clustering coefficient.

\spara{Algorithms for comparison.} To evaluate the effectiveness and efficiency of our algorithms comprehensively, we select baseline algorithms representative of state-of-the-art approaches:
\begin{itemize}[leftmargin=*]
    \item \textbf{Baseline algorithm}: {\NPA}~\cite{flexi}, an existing heuristic algorithm for {\flexi} search.
    \item \textbf{Maximum cohesive subgraph algorithms}: Maximum clique~\cite{bron1973algorithm,tomita2006worst,cazals2008note}, $k$-plex~\cite{chang2022efficient}, and quasi-clique~\cite{yu2023fast} algorithms used to compare with alternative cohesive subgraph models.
\end{itemize}

\spara{Parameter setting.} Unless stated otherwise, the parameter $\tau$ is fixed at $0.9$.
While previous studies on quasi-clique and related models~\cite{yu2023fast,pei2005mining,khalil2022parallel,guo2020scalable} report that density thresholds between $0.5$ and $0.7$ yield interpretable communities, the exponent $\tau$ in {\flexi} controls cohesion sub-linearly with size. Empirically, $\tau=0.9$ produces comparable effective density levels in medium-sized subgraphs (for instance, a 50-node subgraph requires a minimum degree of $\lfloor 50^{0.9}\rfloor = 34$, or about $68\%$ connectivity), which corresponds to the typical density range used in prior models. As shown in EQ3 (Figure~\ref{fig:tau}), the impact of $\tau$ varies across datasets-some networks exhibit slightly increasing running time or smaller subgraphs as $\tau$ grows, while others remain relatively stable. We adopt $\tau=0.9$ as it offers a balanced trade-off between cohesiveness and computational efficiency across all datasets.

\spara{Experimental environment.} All algorithms were implemented in Python using the NetworkX library~\cite{hagberg2008exploring}. Experiments were performed on a machine equipped with an Intel Xeon 6248R CPU and 256GB RAM running Ubuntu 20.04.

\spara{Code availability.}
All source codes for our results are available at 
\underline{\url{https://github.com/Song1940/MaximumFlexiClique}}.

\begin{table}[t]
\centering
\caption{Real‐world datasets}
\label{tab:data}
\centering
\small
\begin{tabular}{c||r|r|r|r}
\hline
\textbf{Dataset} & $\boldsymbol{|V|}$ & $\boldsymbol{|E|}$ & \textbf{Avg. Deg} & \textbf{CC} \\ 
\hline\hline
\textbf{Karate}  &       $34$ &       $78$ & $4.59$ & $0.255$ \\ \hline
\textbf{Polbooks}&      $105$ &      $441$ & $8.40$ & $0.348$ \\ \hline
\textbf{Football}&      $115$ &      $613$ & $10.66$ & $0.407$ \\ \hline
\textbf{Polblogs}&    $1,224$ &   $16,715$ & $27.31$ & $0.226$ \\ \hline
\textbf{Erdös}&    $6,927$ &   $11,850$ & $3.42$ & $0.035$ \\ \hline
\textbf{PGP}     &   $10,680$ &   $24,316$ & $4.55$ & $0.378$ \\ \hline
\textbf{Amazon}  &  $334,863$ &  $925,872$ & $5.53$ & $0.205$ \\ \hline
\textbf{DBLP}    &  $317,080$ &$1,049,866$ & $6.62$ & $0.306$ \\ \hline
\textbf{Florida} &$1,070,376$ &$1,343,951$ & $2.51$ & $0.026$ \\ 
\hline 
\end{tabular}
\vspace{-0.5cm}
\end{table}

\subsection{Experiment results}

\begin{figure}[h]
\centering
\vspace{-0.3cm}
\includegraphics[width=0.95\linewidth]{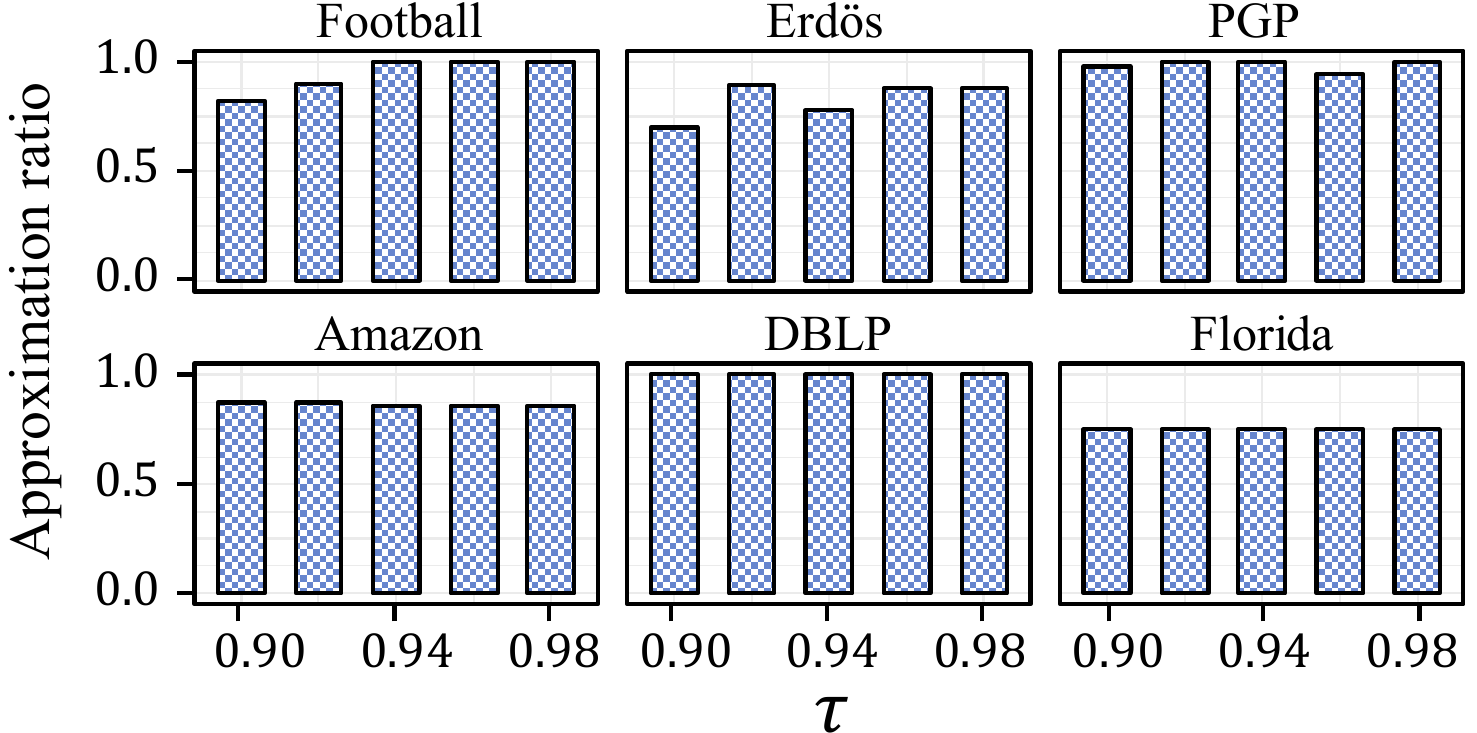}
\caption{EQ1. Effectiveness}
\vspace{-0.4cm} 
\label{fig:approx}
\end{figure}

\spara{EQ1. Effectiveness.} 
To evaluate the effectiveness of our heuristic algorithm {\FPA}, we compare its solution quality with exact results across a range of real-world networks under different values of the parameter $\tau$. As shown in Figure~\ref{fig:approx}, {\FPA} consistently yields high-quality solutions. Notably, it finds the optimal solutions on DBLP for all $\tau$ values, and attains approximation ratios above $90\%$ on most datasets. Even for more challenging networks, such as Erdös and Florida, {\FPA} achieves average ratios of $82\%$ and $75\%$ of the optimum, respectively. 
These results confirm that {\FPA} provides high-quality initial solutions, which serve as a strong starting point for the subsequent refinement phase of {\EBA}. Overall, the heuristic achieves near-optimal effectiveness with significantly reduced computational cost compared to exact methods.

\begin{table}[h]
\centering
\caption{EQ2. Efficiency (running time: sec.)}
\label{tab:eq1}
\small
\begin{tabular}{c||r|r|r|r}
\hline
\textbf{Dataset} 
& \textbf{\NPA}~\cite{flexi} 
& \textbf{\FPA} 
& \textbf{\FastQC}~\cite{yu2023fast}
& \textbf{\EBA} \\
\hline\hline
\textbf{Karate}   
& 0.001   & 0.001   & 0.002  & 0.004   \\ \hline
\textbf{Polbooks} 
& 0.011   & 0.028   & 0.021  & 0.317   \\ \hline
\textbf{Football} 
& 0.024   & 0.023   & 0.029  & 0.193   \\ \hline
\textbf{Polblogs} 
& 1.804   & 0.025   & 0.776  & 0.084   \\ \hline
\textbf{Erdös}    
& 0.176   & 0.066   & 0.165  & 0.769   \\ \hline
\textbf{PGP}      
& 0.955   & 0.093   & 0.487  & 0.192   \\ \hline
\textbf{Amazon}   
& 128.816 & 7.024   & 11.501 & 16.458  \\ \hline
\textbf{DBLP}     
& 69.629  & 10.574  & 13.923 & 19.865  \\ \hline
\textbf{Florida}  
& 14.931  & 7.953   & 13.027 & 23.650  \\
\hline
\end{tabular}
\end{table}

\spara{EQ2. Efficiency.} 
Table~\ref{tab:eq1} reports the running time of the proposed algorithms, {\FPA} and {\EBA}, together with the baseline {\NPA}~\cite{flexi}, across a range of real-world networks. Across most of the datasets, {\FPA} consistently outperforms {\NPA}. This improvement primarily arises from the use of an efficient data structure for maintaining articulation points, which enables incremental connectivity updates. In contrast, {\NPA} repeatedly rescans the entire candidate subgraph at each peeling step, incurring substantial overhead as the subgraph grows. Although {\EBA} introduces additional cost due to branch-and-bound exploration and iterative refinement of the initial solution produced by {\FPA}, it benefits from a strong starting bound and effective pruning. As a result, {\EBA} converges quickly on most datasets and remains practical even on graphs with tens of thousands of nodes.

We further compare {\EBA} with {\FastQC}~\cite{yu2023fast}, a representative algorithm for degree-based quasi-clique detection with $\gamma=0.9$ and minimum size threshold $\theta=1$. For quasi-cliques with $\gamma>0.5$, the induced subgraph is known to have diameter at most two~\cite{yu2023fast}, which allows {\FastQC} to restrict its search to a local neighbourhood. In contrast, the {\flexi} model is defined by a sub-linear degree constraint, under which no bounded-diameter guarantee can be derived, requiring {\EBA} to explore a broader search space. Despite this structural difference, {\EBA} demonstrates running times comparable to {\FastQC} on most datasets, highlighting the efficiency of the proposed algorithmic design and pruning strategies.

\begin{figure}[h]
\includegraphics[width=0.99\linewidth]
{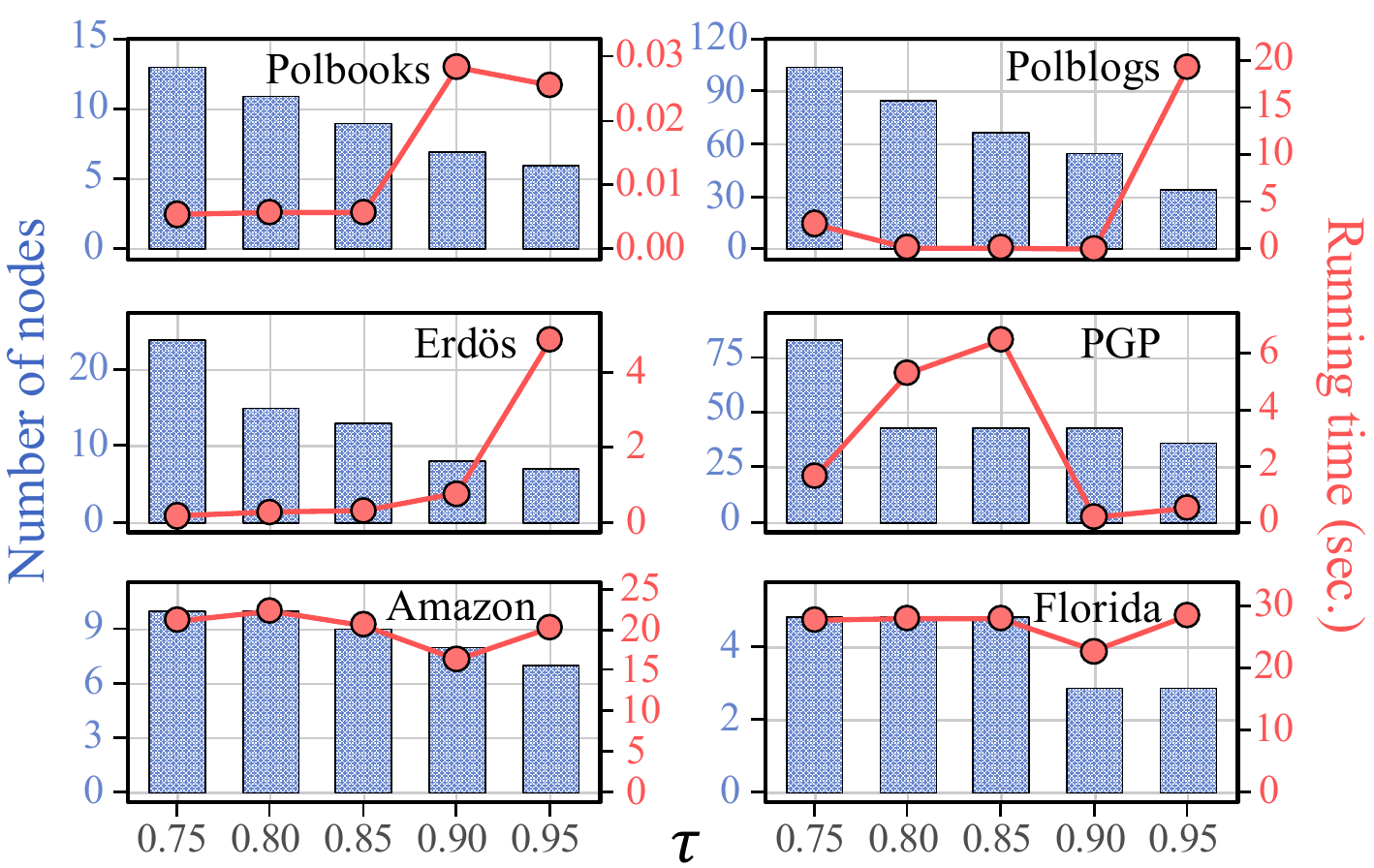}
\caption{EQ3. Effect of $\tau$}
\vspace{-0.5cm} 
\label{fig:tau}
\end{figure}

\spara{EQ3. Effect of $\tau$.} Figure~\ref{fig:tau} presents the running time of {\EBA} and the corresponding {\flexi} size as $\tau$ ranges from $0.75$ to $0.95$. There is no consistent trend in running time across networks with increasing $\tau$: for example, the running time consistently grows with $\tau$ on the Erdös dataset, whereas no such trend is observed in other datasets. This variation can be attributed to structural differences among networks, such as sparsity and degree distribution, which affect pruning efficiency and the expansion of the search tree. Since {\EBA} starts from the reduced searching space by {\FPA}'s solution, $\tau$ also indirectly affects running time through changes in the quality and size of that starting point. Nevertheless, $\tau$ has a clear and consistent effect on the solution size across all datasets. As $\tau$ increases, the degree constraint becomes stricter, resulting in smaller yet denser {\flexi} subgraphs.

\begin{figure}[h]
\centering
\vspace{-0.4cm}
    \begin{subfigure}{.48\linewidth}
    \includegraphics[width=0.99\linewidth]{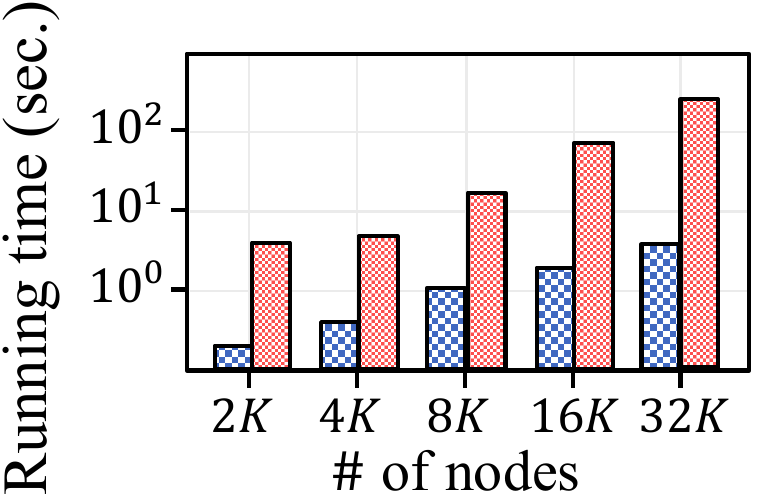}
    \captionsetup{margin={0.8cm,0cm}}
    \caption{Scalability test}
    \label{fig:syn1}
    \end{subfigure}
    \begin{subfigure}{.48\linewidth}
    \includegraphics[width=0.99\linewidth]{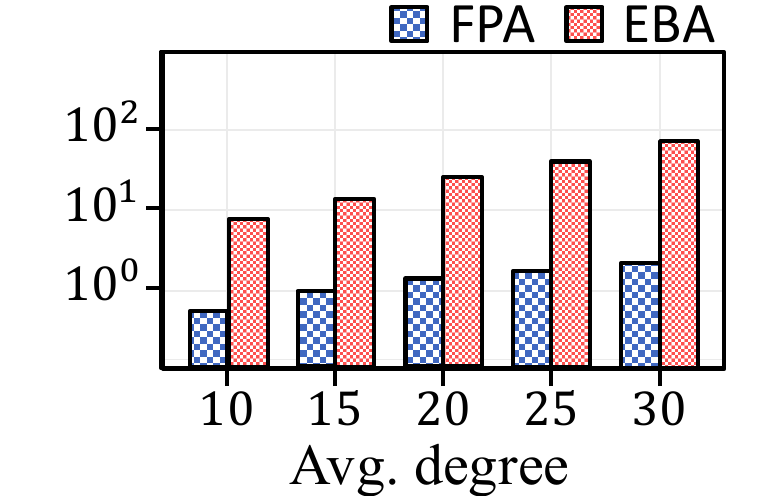}
    \caption{Effect of average degree}
    \label{fig:syb2}
    \end{subfigure}
\caption{EQ4. Synthetic analysis} 
\label{fig:synthetic}
\vspace{-0.3cm}
\end{figure}

\spara{EQ4. Synthetic analysis.} 
To evaluate scalability, we conduct experiments on LFR synthetic networks~\cite{lancichinetti2008benchmark}. 
The synthetic networks are generated with the following configuration setting: network size $10,000$, average degree $20$, maximum degree $200$, and mixing parameter $\mu=0.1$, with other parameters set to default values.
We analyse two aspects: (1) scalability with respect to network size, varying it from $100$ to $1M$ while keeping other parameters fixed; and (2) the effect of average degree, varying it from $10$ to $30$ with the network size fixed at $10,000$. All experiments are conducted with $\tau = 0.9$. Figure~\ref{fig:synthetic} reports the results. {\FPA} exhibits near-linear scalability, finishing all the tasks within two seconds. {\EBA} requires longer computation time than {\FPA} due to its exhaustive search, but it remains efficient in practice: running time increases near-linearly with network size and average degree without explosive growth.

\begin{figure*}[t]  
\centering
\vspace{-0.3cm}
\begin{subfigure}[b]{\textwidth}  
    \centering
    \includegraphics[width=0.99\textwidth]{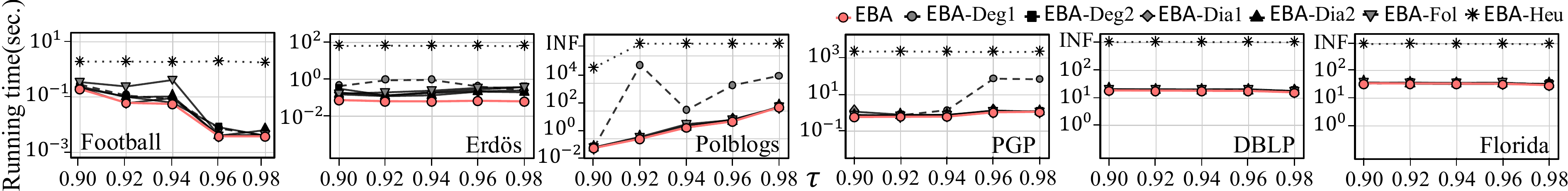}
    \caption{Impact of individual pruning rules on running time}
    \label{fig:ab_runtime}
\end{subfigure}

\vspace{-0.1cm}  

\begin{subfigure}[b]{\textwidth}  %
    \centering
    \includegraphics[width=0.99\textwidth]{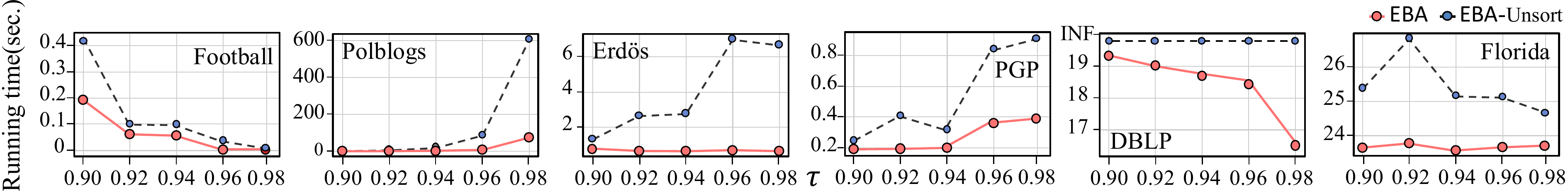}
    \caption{Effect of candidate sorting strategy}
    \label{fig:ab_sort}
\end{subfigure}
\vspace{-0.6cm}
\caption{EQ5. Ablation study}
\label{fig:ablation_study}
\vspace{-0.6cm}
\end{figure*}

\spara{EQ5. Ablation study.}
To evaluate the contribution of each pruning rule, we conduct an ablation study where each rule is removed from {\EBA} in turn: 
Rule~\ref{rule:rule1} ({\EBA}-Deg1), Rule~\ref{rule:rule2} ({\EBA}-Deg2), Rule~\ref{rule:rule3} ({\EBA}-Dia1), 
Rule~\ref{rule:rule4} ({\EBA}-Dia2), Rule~\ref{rule:rule5} ({\EBA}-Fol), and Rule~\ref{rule:rule6} ({\EBA}-Heu). 
Experiments are conducted with $\tau$ varying from $0.9$ to $0.98$ in increments of $0.02$, and cases exceeding 48 hours are marked as \text{INF}.

Figure~\ref{fig:ablation_study} shows the results on six datasets. 
{\EBA} achieves the fastest running time overall, but the effect of each rule varies by dataset. 
The heuristic-based initial search space reduction (Rule~\ref{rule:rule6}) has the most pronounced impact: without it, large datasets such as DBLP and Florida fail to terminate, and even Polblogs cannot complete execution. 
Degree-based pruning (Rules~\ref{rule:rule1}, \ref{rule:rule2}) and follower-based pruning (Rule~\ref{rule:rule5}) provide greater benefit than diameter-based rules (Rules~\ref{rule:rule3}, \ref{rule:rule4}), as they can directly terminate unpromising branches rather than merely reducing candidate sets.

We also compare the performance of maintaining the candidate set in sorted order, as proposed in Section~\ref{subsec:5_2}, against an approach that branches based on node IDs ({\EBA}-Unsorted). As shown in the Figure~\ref{fig:ab_sort}, maintaining sorted candidates yields substantial performance improvements across all cases, with an average speedup of approximately $4\times$ across all datasets. Notably, on the DBLP dataset, the unsorted approach generates numerous infeasible branches early in the search, preventing termination. This demonstrates that our degree-based candidate ordering plays a critical role in algorithm performance enhancement.

\begin{figure}[h]
\vspace{-0.2cm}
\centering
\includegraphics[width=0.9\linewidth]{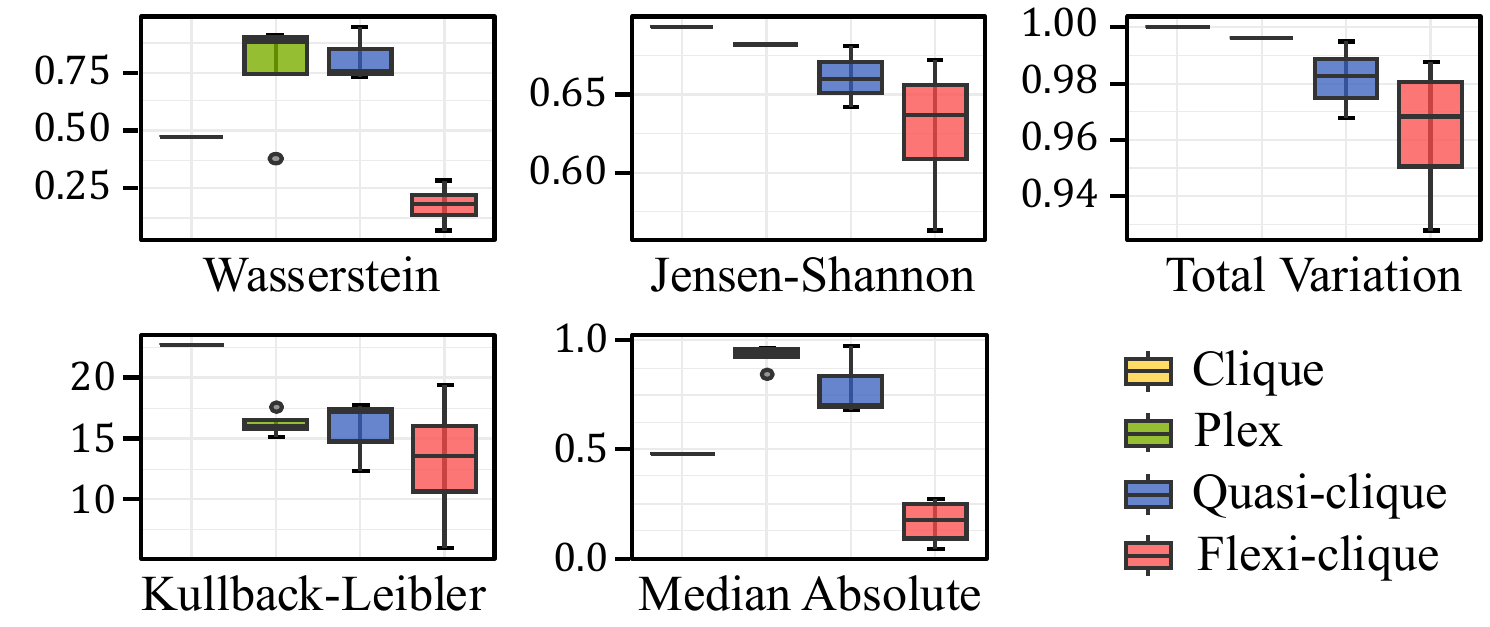}
\captionsetup{justification=centering} 
\caption{EQ6. Case study: Density distribution distance comparison among cohesive subgraph models}
\vspace{-0.4cm}
\label{fig:density_results}
\end{figure}

\spara{EQ6. Case study.}
To evaluate how effectively {\flexi} captures the structural evolution of real-world communities, we analyse the Eu-mail network~\cite{paranjape2017motifs}, a temporal graph with ground-truth community labels. 
We select a representative community and track its growth over time, representing the community on each day $i$ as $G_i=(V_i,E_i)$. 
For each $G_i$, we record the number of active nodes $|V_i|$ and edges $|E_i|$, and extract the maximum clique, $k$-plex, $\gamma$-quasi-clique, and {\flexi} subgraphs under varying parameter settings: $k$-plex with $k \in \{3,5,7\}$, $\gamma$-quasi-clique with $\gamma \in \{0.6,0.7,0.8,0.9\}$, and {\flexi} with $\tau \in \{0.6,0.7,0.8,0.9\}$.
We then compute the density $\frac{|E_i|}{|V_i|(|V_i|-1)/2}$ at each timestamp, producing temporal density distributions that describe how each model reflects community evolution.

To quantitatively compare these distributions with those of the ground-truth communities, we employ five distance metrics: Wasserstein distance~\cite{kantorovich1942translocation}, Jensen–Shannon divergence~\cite{lin2002divergence}, Total Variation distance~\cite{kantorovich1942translocation}, symmetric Kullback–Leibler divergence~\cite{kullback1951information}, and median absolute difference~\cite{conover1999practical}. 
For all metrics, smaller values indicate greater similarity to the true density distribution.

Figure~\ref{fig:density_results} summarises the results. Both clique and $k$-plex models perform poorly across all metrics due to their overly strict connectivity constraints. Cliques maintain a constant density of exactly $1.0$ by definition, while $k$-plex structures exhibit densities that converge to $1.0$ over time, both failing to capture the gradual density decline as real communities expand.
The $\gamma$-quasi-clique model provides some relaxation but still imposes a fixed high threshold, thus missing realistic temporal variation. In contrast, {\flexi} closely follows the empirical density trajectory, where density decreases naturally with community growth. Across all metrics, {\flexi} achieves the closest match to the ground truth, demonstrating that its size-adaptive constraint effectively models the dynamic evolution of real-world community structure.

%% file: 08_conclusion.tex
\section{CONCLUSION}\label{sec:conclusion}

We presented a comprehensive study of the {\flexi} model for cohesive subgraph discovery. We proved the maximum {\flexi} problem to be NP-hard and proposed two complementary algorithms: an exact branch-and-bound method ({\EBA}) and a fast heuristic ({\FPA}). Experiments showed that {\FPA} achieves near-optimal quality with large efficiency gains, while {\EBA} finds optimal solutions for moderate graphs. The {\flexi} model effectively captures size-adaptive cohesiveness, bridging cohesive connectivity models and real-world network structures. Future work will extend {\flexi} to dynamic and attributed graphs, and investigate the enumeration of maximal {\flexi} structures, which poses additional challenges due to the non-hereditary and connectivity-preserving nature of the model, and requires new algorithmic principles beyond those used for classical cohesive subgraph enumeration.



%% file: name.bib
@article{wuchty2007increasing,
  title={The increasing dominance of teams in production of knowledge},
  author={Wuchty, Stefan and Jones, Benjamin F and Uzzi, Brian},
  journal={Science},
  volume={316},
  number={5827},
  pages={1036--1039},
  year={2007},
  publisher={American Association for the Advancement of Science}
}

@inproceedings{sleator1981data,
  title={A data structure for dynamic trees},
  author={Sleator, Daniel D and Tarjan, Robert Endre},
  booktitle={Proceedings of the thirteenth annual ACM symposium on Theory of computing},
  pages={114--122},
  year={1981}
}

@article{malliaros2020core,
  title={The core decomposition of networks: Theory, algorithms and applications},
  author={Malliaros, Fragkiskos D and Giatsidis, Christos and Papadopoulos, Apostolos N and Vazirgiannis, Michalis},
  journal={The VLDB Journal},
  volume={29},
  number={1},
  pages={61--92},
  year={2020},
  publisher={Springer}
}

@article{pattillo2013maximum,
  title={On the maximum quasi-clique problem},
  author={Pattillo, Jeffrey and Veremyev, Alexander and Butenko, Sergiy and Boginski, Vladimir},
  journal={Discrete Applied Mathematics},
  volume={161},
  number={1-2},
  pages={244--257},
  year={2013},
  publisher={Elsevier}
}

@article{seidman1983network,
  title={Network structure and minimum degree},
  author={Seidman, Stephen B},
  journal={Social networks},
  volume={5},
  number={3},
  pages={269--287},
  year={1983},
  publisher={Elsevier}
}

@inproceedings{wang2022listing,
  title={Listing maximal k-plexes in large real-world graphs},
  author={Wang, Zhengren and Zhou, Yi and Xiao, Mingyu and Khoussainov, Bakhadyr},
  booktitle={Proceedings of the ACM Web Conference 2022},
  pages={1517--1527},
  year={2022}
}

@inproceedings{conte2017fast,
  title={Fast enumeration of large k-plexes},
  author={Conte, Alessio and Firmani, Donatella and Mordente, Caterina and Patrignani, Maurizio and Torlone, Riccardo},
  booktitle={Proceedings of the 23rd ACM SIGKDD international conference on knowledge discovery and data mining},
  pages={115--124},
  year={2017}
}

@inproceedings{lin2021constant,
  title={Constant approximating k-clique is w [1]-hard},
  author={Lin, Bingkai},
  booktitle={Proceedings of the 53rd Annual ACM SIGACT Symposium on Theory of Computing},
  pages={1749--1756},
  year={2021}
}

@article{batagelj2003m,
  title={An o (m) algorithm for cores decomposition of networks},
  author={Batagelj, Vladimir and Zaversnik, Matjaz},
  journal={arXiv preprint cs/0310049},
  year={2003}
}

@article{lanciano2024survey,
  title={A survey on the densest subgraph problem and its variants},
  author={Lanciano, Tommaso and Miyauchi, Atsushi and Fazzone, Adriano and Bonchi, Francesco},
  journal={ACM Computing Surveys},
  volume={56},
  number={8},
  pages={1--40},
  year={2024},
  publisher={ACM New York, NY}
}

@inproceedings{tsourakakis2013denser,
  title={Denser than the densest subgraph: extracting optimal quasi-cliques with quality guarantees},
  author={Tsourakakis, Charalampos and Bonchi, Francesco and Gionis, Aristides and Gullo, Francesco and Tsiarli, Maria},
  booktitle={Proceedings of the 19th ACM SIGKDD international conference on Knowledge discovery and data mining},
  pages={104--112},
  year={2013}
}

@article{sanei2021mining,
  title={Mining largest maximal quasi-cliques},
  author={Sanei-Mehri, Seyed-Vahid and Das, Apurba and Hashemi, Hooman and Tirthapura, Srikanta},
  journal={ACM Transactions on Knowledge Discovery from Data (TKDD)},
  volume={15},
  number={5},
  pages={1--21},
  year={2021},
  publisher={ACM New York, NY, USA}
}

@inproceedings{charikar2000greedy,
  title={Greedy approximation algorithms for finding dense components in a graph},
  author={Charikar, Moses},
  booktitle={International workshop on approximation algorithms for combinatorial optimization},
  pages={84--95},
  year={2000},
  organization={Springer}
}

@inproceedings{flexi,
  title={Flexi-clique: Exploring Flexible and Sub-linear Clique Structures},
  author={Kim, Song and Kim, Junghoon and Yoon, Susik and Kim, Jungeun},
  booktitle={Proceedings of the 33rd ACM International Conference on Information and Knowledge Management},
  pages={3832--3836},
  year={2024}
}

@inproceedings{chang2019efficient,
  title={Efficient maximum clique computation over large sparse graphs},
  author={Chang, Lijun},
  booktitle={Proceedings of the 25th ACM SIGKDD International Conference on Knowledge Discovery \& Data Mining},
  pages={529--538},
  year={2019}
}

@book{leskovec2020mining,
  title={Mining of massive data sets},
  author={Leskovec, Jure and Rajaraman, Anand and Ullman, Jeffrey David},
  year={2020},
  publisher={Cambridge university press}
}

@article{goldberg1984finding,
  title={Finding a maximum density subgraph},
  author={Goldberg, Andrew V},
  year={1984},
  publisher={University of California Berkeley}
}

@article{khalil2022parallel,
  title={Parallel mining of large maximal quasi-cliques},
  author={Khalil, Jalal and Yan, Da and Guo, Guimu and Yuan, Lyuheng},
  journal={The VLDB Journal},
  volume={31},
  number={4},
  pages={649--674},
  year={2022},
  publisher={Springer}
}

@article{ribeiro2019exact,
  title={An exact algorithm for the maximum quasi-clique problem},
  author={Ribeiro, Celso C and Riveaux, Jos{\'e} A},
  journal={International Transactions in Operational Research},
  volume={26},
  number={6},
  pages={2199--2229},
  year={2019},
  publisher={Wiley Online Library}
}

@book{karp2010reducibility,
  title={Reducibility among combinatorial problems},
  author={Karp, Richard M},
  year={2010},
  publisher={Springer}
}

@article{yao2021efficient,
  title={Efficient size-bounded community search over large networks},
  author={Yao, Kai and Chang, Lijun},
  journal={Proceedings of the VLDB Endowment},
  volume={14},
  number={8},
  pages={1441--1453},
  year={2021},
  publisher={VLDB Endowment}
}

@article{metcalfe2013metcalfe,
  title={Metcalfe's law after 40 years of ethernet},
  author={Metcalfe, Bob},
  journal={Computer},
  volume={46},
  number={12},
  pages={26--31},
  year={2013},
  publisher={IEEE}
}

@article{tverskoi2021dynamics,
  title={The dynamics of cooperation, power, and inequality in a group-structured society},
  author={Tverskoi, Denis and Senthilnathan, Athmanathan and Gavrilets, Sergey},
  journal={Scientific reports},
  volume={11},
  number={1},
  pages={18670},
  year={2021},
  publisher={Nature Publishing Group UK London}
}

@article{balasundaram2011clique,
  title={Clique relaxations in social network analysis: The maximum k-plex problem},
  author={Balasundaram, Balabhaskar and Butenko, Sergiy and Hicks, Illya V},
  journal={Operations Research},
  volume={59},
  number={1},
  pages={133--142},
  year={2011},
  publisher={INFORMS}
}

@article{guo2020scalable,
  title={Scalable mining of maximal quasi-cliques: an algorithm-system codesign approach},
  author={Guo, Guimu and Yan, Da and {\"O}zsu, M Tamer and Jiang, Zhe and Khalil, Jalal},
  journal={Proceedings of the VLDB Endowment},
  volume={14},
  number={4},
  pages={573--585},
  year={2020},
  publisher={VLDB Endowment}
}

@article{chang2022efficient,
  title={Efficient maximum k-plex computation over large sparse graphs},
  author={Chang, Lijun and Xu, Mouyi and Strash, Darren},
  journal={Proceedings of the VLDB Endowment},
  volume={16},
  number={2},
  pages={127--139},
  year={2022},
  publisher={VLDB Endowment}
}

@article{seidman1978graph,
  title={A graph-theoretic generalization of the clique concept},
  author={Seidman, Stephen B and Foster, Brian L},
  journal={Journal of Mathematical sociology},
  volume={6},
  number={1},
  pages={139--154},
  year={1978},
  publisher={Taylor \& Francis}
}

@article{lancichinetti2008benchmark,
  title={Benchmark graphs for testing community detection algorithms},
  author={Lancichinetti, Andrea and Fortunato, Santo and Radicchi, Filippo},
  journal={Physical review E},
  volume={78},
  number={4},
  pages={046110},
  year={2008},
  publisher={APS}
}

@article{bron1973algorithm,
  title={Algorithm 457: finding all cliques of an undirected graph},
  author={Bron, Coen and Kerbosch, Joep},
  journal={Communications of the ACM},
  volume={16},
  number={9},
  pages={575--577},
  year={1973},
  publisher={ACM New York, NY, USA}
}

@article{tomita2006worst,
  title={The worst-case time complexity for generating all maximal cliques and computational experiments},
  author={Tomita, Etsuji and Tanaka, Akira and Takahashi, Haruhisa},
  journal={Theoretical computer science},
  volume={363},
  number={1},
  pages={28--42},
  year={2006},
  publisher={Elsevier}
}

@article{cazals2008note,
  title={A note on the problem of reporting maximal cliques},
  author={Cazals, Fr{\'e}d{\'e}ric and Karande, Chinmay},
  journal={Theoretical computer science},
  volume={407},
  number={1-3},
  pages={564--568},
  year={2008},
  publisher={Elsevier}
}

@techreport{hagberg2008exploring,
  title={Exploring network structure, dynamics, and function using NetworkX},
  author={Hagberg, Aric and Swart, Pieter J and Schult, Daniel A},
  year={2008},
  institution={Los Alamos National Laboratory (LANL), Los Alamos, NM (United States)}
}

@inproceedings{konect,
	title = {{KONECT} -- {The} {Koblenz} {Network} {Collection}},
	author = {J\'{e}r\^{o}me Kunegis},
	year = {2013},
	booktitle = {Proc. Int. Conf. on World Wide Web Companion},
	pages = {1343--1350},
	url = {http://dl.acm.org/citation.cfm?id=2488173},
}

@inproceedings{pei2005mining,
  title={On mining cross-graph quasi-cliques},
  author={Pei, Jian and Jiang, Daxin and Zhang, Aidong},
  booktitle={Proceedings of the eleventh ACM SIGKDD international conference on Knowledge discovery in data mining},
  pages={228--238},
  year={2005}
}

@article{yu2023fast,
  title={Fast maximal quasi-clique enumeration: A pruning and branching co-design approach},
  author={Yu, Kaiqiang and Long, Cheng},
  journal={Proceedings of the ACM on Management of Data},
  volume={1},
  number={3},
  pages={1--26},
  year={2023},
  publisher={ACM New York, NY, USA}
}

@inproceedings{paranjape2017motifs,
  title={Motifs in temporal networks},
  author={Paranjape, Ashwin and Benson, Austin R and Leskovec, Jure},
  booktitle={Proceedings of the tenth ACM international conference on web search and data mining},
  pages={601--610},
  year={2017}
}

@inproceedings{kantorovich1942translocation,
  title={On the translocation of masses},
  author={Kantorovich, Leonid V},
  booktitle={Dokl. Akad. Nauk. USSR (NS)},
  volume={37},
  pages={199--201},
  year={1942}
}

@article{lin2002divergence,
  title={Divergence measures based on the Shannon entropy},
  author={Lin, Jianhua},
  journal={IEEE Transactions on Information theory},
  volume={37},
  number={1},
  pages={145--151},
  year={2002},
  publisher={IEEE}
}

@article{kullback1951information,
  title={On information and sufficiency},
  author={Kullback, Solomon and Leibler, Richard A},
  journal={The annals of mathematical statistics},
  volume={22},
  number={1},
  pages={79--86},
  year={1951},
  publisher={JSTOR}
}

@book{conover1999practical,
  title={Practical nonparametric statistics},
  author={Conover, William Jay},
  year={1999},
  publisher={john wiley \& sons}
}

@article{carraghan1990exact,
  title={An exact algorithm for the maximum clique problem},
  author={Carraghan, Randy and Pardalos, Panos M},
  journal={Operations Research Letters},
  volume={9},
  number={6},
  pages={375--382},
  year={1990},
  publisher={Elsevier}
}

@article{chang2024maximum,
  title={Maximum k-plex computation: Theory and practice},
  author={Chang, Lijun and Yao, Kai},
  journal={Proceedings of the ACM on Management of Data},
  volume={2},
  number={1},
  pages={1--26},
  year={2024},
  publisher={ACM New York, NY, USA}
}

@article{mahdavi2014branch,
  title={A branch-and-bound approach for maximum quasi-cliques},
  author={Mahdavi Pajouh, Foad and Miao, Zhuqi and Balasundaram, Balabhaskar},
  journal={Annals of Operations Research},
  volume={216},
  number={1},
  pages={145--161},
  year={2014},
  publisher={Springer}
}

@article{trukhanov2013algorithms,
  title={Algorithms for detecting optimal hereditary structures in graphs, with application to clique relaxations},
  author={Trukhanov, Svyatoslav and Balasubramaniam, Chitra and Balasundaram, Balabhaskar and Butenko, Sergiy},
  journal={Computational Optimization and Applications},
  volume={56},
  number={1},
  pages={113--130},
  year={2013},
  publisher={Springer}
}

@misc{kim2026appendix,
  author = {Song Kim and Hyewon Kim and Kaiqiang Yu and Taejoon Han and Junghoon Kim and Susik Yoon and Jungeun Kim},
  title = {{Online appendix of Efficient Computation of Maximum Flexi-Clique in Networks}},
  howpublished = {Available at: \url{https://github.com/Song1940/MaximumFlexiClique/blob/main/Appendix.pdf}}
}
